\documentclass[format=acmsmall, review=false]{acmart}
\usepackage{acm-ec-26}
\usepackage{booktabs} %
\usepackage[ruled]{algorithm2e} %

\SetAlFnt{\small}
\SetAlCapFnt{\small}
\SetAlCapNameFnt{\small}
\SetAlCapHSkip{0pt}
\IncMargin{-\parindent}

\usepackage{threeparttable}
\usepackage{tikz}
\usetikzlibrary{shapes}
\usetikzlibrary{calc}
\usetikzlibrary{decorations.pathreplacing}

\setcitestyle{authoryear}
\allowdisplaybreaks

\title[Deep DSIC Duality]{Duality for Optimal Multi-Item, Multi-Bidder Auction Design: Revenue Certificates through Deep Learning}

\author{Yanchen Jiang}
\email{}                       %
\affiliation{%
  \institution{Harvard University}
  \city{Cambridge} \state{MA} \postcode{02138} \country{USA}
}

\author{David C. Parkes}
\email{}                       %
\affiliation{%
  \institution{Harvard University}
  \city{Cambridge} \state{MA} \postcode{02138} \country{USA}
}

\author{Tonghan Wang}
\authornote{Corresponding author.}
\email{twang1@g.harvard.edu}
\affiliation{%
  \institution{College of AI, Tsinghua University}
  \city{Beijing} \country{China}
}

\begin{abstract}

\end{abstract}

\usepackage{amsmath}
\usepackage{mathtools}
\usepackage{amsthm}
\usepackage{wrapfig}

\usepackage[capitalize,noabbrev]{cleveref}

\usepackage{thmtools}
\makeatletter
\def\thmheadbrackets#1#2#3{%
  \thmname{#1}\thmnumber{\@ifnotempty{#1}{ }\@upn{#2}}%
  \thmnote{ {\the\thm@notefont[#3]}}}
\makeatother

\newtheoremstyle{brakets}%
  {}%
  {}%
  {\itshape}%
  {}%
  {\bfseries}%
  {.}%
  { }%
  {\thmheadbrackets{#1}{#2}{#3}}%

\theoremstyle{brakets}

\newtheorem{theorem}{Theorem}

\newtheorem{lemma}[theorem]{Lemma}

\newtheorem{definition}[theorem]{Definition}

\theoremstyle{remark}

\definecolor{darkgreen}{rgb}{0.0, 0.5, 0.0}
\definecolor{darkblue}{rgb}{0.0, 0.5, 1.0}

\newcount\Comments  %
\newcommand{\kibitz}[2]{\ifnum\Comments=1{\color{#1}{#2}}\fi}

\newcount\CommentsAdd  %
\newcommand{\kibitzAdd}[2]{\ifnum\CommentsAdd=1{\color{#1}{#2}}\fi}
\definecolor{english}{rgb}{0.0, 0.5, 0.0}
\definecolor{tw}{rgb}{0.0, 0.0, 0.5}

\usepackage{xcolor}

\usepackage{amsmath,amsfonts,bm}

\def\eqref#1{equation~\ref{#1}}

\def\vw{{\bm{w}}}

\DeclareMathAlphabet{\mathsfit}{\encodingdefault}{\sfdefault}{m}{sl}
\SetMathAlphabet{\mathsfit}{bold}{\encodingdefault}{\sfdefault}{bx}{n}

\newcommand{\supp}{\mathrm{supp}}
\newcommand{\Rev}{\mathrm{Rev}}
\newcommand{\E}{\mathbb{E}}

\newcommand{\dV}{\,\mathrm{d}v}

\usepackage{multirow}
\usepackage{makecell}
\newcolumntype{L}{>{$}l<{$}}
\newcolumntype{C}{>{$}c<{$}}
\newcolumntype{R}{>{$}r<{$}}

\crefname{section}{Section}{Sections}
\Crefname{section}{Section}{Sections}
\crefname{chapter}{Chapter}{Chapters}
\Crefname{chapter}{Chapter}{Chapters}
\crefname{appendix}{Appendix}{Appendices}
\Crefname{appendix}{Appendix}{Appendices}
\crefname{equation}{Eq.}{Eqs.}
\Crefname{equation}{Eq.}{Eqs.}
\crefname{part}{Part}{Parts}
\Crefname{part}{Part}{Parts}
\crefname{table}{Table}{Tables}
\Crefname{table}{Table}{Tables}
\crefname{figure}{Figure}{Figures}
\Crefname{figure}{Figure}{Figures}
\crefname{theorem}{Theorem}{Theorems}
\Crefname{theorem}{Theorem}{Theorems}
\crefname{proposition}{Proposition}{Propositions}
\Crefname{proposition}{Proposition}{Propositions}
\crefname{corollary}{Corollary}{Corollaries}
\Crefname{corollary}{Corollary}{Corollaries}
\crefname{lemma}{Lemma}{Lemmas}
\Crefname{lemma}{Lemma}{Lemmas}
\crefname{finding}{Finding}{Findings}
\Crefname{finding}{Finding}{Findings}
\crefname{assumption}{Assumption}{Assumptions}
\Crefname{assumption}{Assumption}{Assumptions}
\crefname{algorithm}{Algorithm}{Algorithms}
\Crefname{algorithm}{Algorithm}{Algorithms}

\begin{document}

\begin{abstract}
    Characterizing revenue-optimal auctions for multi-item, multi-bidder settings remains a fundamental open problem, with no known closed-form solution existing beyond restrictive binary-type instances. This has motivated interest in computational approaches to optimal auction design. In this paper, we introduce the first computational framework that directly tackles the dual problem for multi-item, multi-bidder auctions and dominant-strategy incentive compatibility (DSIC), generating certified revenue upper bounds. Our approach parametrizes Lagrange multipliers with a structurally guaranteed strict flow-conservation property using neural networks, enabling efficient optimization over feasible dual solutions via gradient descent. To bridge the gap between discrete computational methods and  theoretical guarantees for continuous types, we develop a novel lifting technique that maps dual certificates from coarse discretizations to fine refinements. We prove that lifting  gives valid revenue upper bounds for multi-item, multi-bidder auctions with continuous uniform valuations. Furthermore, we give a generalized lifting construction for arbitrary continuous distributions and demonstrate that these lifted duals  converge to the revenue of the original continuous problem in the discrete limit. We validate this  computational framework for the dual auction design problem by recovering known analytical mechanisms for canonical instances. For multi-item multi-bidder problems, our framework establishes a small gap between the optimal revenue and best-known DSIC mechanisms, providing  computational certificates of  near-optimality.
\end{abstract}

\begin{titlepage}

\maketitle\makeatletter \gdef\@ACM@checkaffil{} \makeatother
\setcounter{tocdepth}{2} %

\end{titlepage}

\section{Introduction}

Auctions remain among the most enduring and significant market institutions, and they underpin a wide range of modern economic activities, such as online advertising, the allocation of spectrum and energy procurement contracts, the issuance of treasury and other financial securities, and the pricing of computing resources in online markets. Their study, particularly the design of revenue-maximizing auctions, is a canonical problem in economic theory. 

The foundational contribution of~\citet{myerson1981optimal} characterizes the revenue optimal mechanism for selling a single item, but a comprehensive understanding of optimal auction design in general settings remains elusive. In the domain of {\em dominant-strategy incentive-compatible} (DSIC, or strategy-proof) mechanisms, existing closed-form characterizations are confined to variants of the single-bidder setting~\citep{daskalakis2015strong,manelli2006bundling,pavlov2011optimal, giannakopoulos2014duality}, where DSIC also coincides with Bayesian incentive compatibility (BIC), and to very limited settings for multiple bidders—most notably, settings with two items and distributions supported on two values~\citep{yao2017dominant}.

The use of computational techniques to design  mechanisms such as auctions, termed \emph{automated mechanism design} (AMD), was introduced by~\citet{conitzer2002,conitzer2004}. Early work on AMD advocated linear programming (LP) approaches in the case of agents with discrete type spaces~\citep{conitzer2002,conitzer2004,conitzer2006computational}. However, LP-based formulations scale exponentially in both the number of bidders and the number of items: with $k$ types per item and $n$ bidders facing $m$ items, the joint type space has size $k^{nm}$, and the LP requires a variable and a DSIC constraint for every profile.
More recently, \emph{differentiable economics}~\citep{dutting2024optimal,rahme2020auction,ivanov2022optimal,curry2022differentiable,duan2023scalable,shen2019automated,wang2025bundleflow}  proposes to make use of deep learning to discover revenue-maximizing auctions. In this framework, neural networks serve as flexible representations of allocation and payment rules, these rules optimized via sampling from a known type distribution and minimizing carefully designed objective functions.

Among the methods of differentiable economics, {\em GemNet}~\citep{wang2024gemnet} provides the first approach to auction design that simultaneously satisfies  three important properties: (1) \emph{expressiveness}, meaning the function class induced by the network architecture contains revenue-optimal auctions; (2) \emph{strategy-proofness}, ensuring the learned mechanisms satisfy DSIC; and (3) \emph{multi-bidder, multi-item applicability}. In this way, and recognizing they are DSIC, 
the auctions designed by GemNet provide a certified lower bound on the optimal revenue of multi-bidder, multi-item DSIC auctions.

Despite this progress, GemNet frames  revenue as an objective function to be optimized by gradient ascent (actually, descent in regard to negated revenue), and despite strong empirical performance (essentially matching theoretical benchmarks where they exist), it does not offer a formal guarantee of attaining the  optimal design. Achieving this kind of optimality would require the GemNet learning procedure to converge to a global maximizer of a highly non-convex non-smooth learning objective--a well-known limitation of modern deep learning methods~\citep{bottou2018optimization,Goodfellow-et-al-2016}.

This leaves a gap: how to compute a strong, valid upper bound on DSIC revenue in general, multi-bidder, multi-item DSIC auctions? The present paper develops the first computational framework for computing certificates on revenue for multi-item, multi-bidder DSIC auctions.
We compute a \emph{certified upper bound} on optimal revenue by studying the dual formulation from the perspective of deep learning augmented by a sound lifting scheme that is used to improve the bound that comes from computation on a finite grid.
Our computational approach leverages the duality framework for  discrete type, BIC optimal auctions~\citep{cai2016duality}, as generalized to the discrete type, multi-bidder DSIC setting,\footnote{Personal communication, S.~M.~Weinberg, E.~Xue, and E.~Ryu.}
and further extended here to the continuous type, multi-bidder DSIC setting.

We start by discretizing the continuous valuation distribution, concentrating probability mass onto a finite grid of bidder types. As established in prior work, 
a dual formulation for the optimal design problem 
is available for this multi-bidder, multi-item DSIC problem, 
this developed by Lagrangifying the DSIC constraints. The optimal Lagrange multipliers  give the virtual value of each item to each bidder. By strong duality, these virtual values give  the exact optimal revenue for the auction under the discretized distribution.
However, solving for these optimal Lagrange multipliers is challenging: (1) the number of  variables increases with the number of DSIC constraints, creating significant scalability issues; and (2) Lagrange multipliers have to satisfy a flow conservation property to ensure the objective function of the dual is finite (and violating these flow constraints even slightly invalidates the dual certificate).

To address these challenges, we learn the Lagrange multipliers via a novel deep learning architecture that enforces flow conservation as a structural property of the architectural design. Rather than representing multipliers directly, which would require enforcing constraints on a large output space, the neural network design learns a state-dependent routing policy, where the state is the valuation profile $v_{-i}$ of the other bidders. By interpreting this policy as defining an absorbing Markov chain on the grid of types, we can solve  by construction for the unique flows that satisfy conservation. This ensures that the neural network produces a valid, flow-conserving dual solution, and therefore a valid  upper bound on optimal revenue. In this way, we convert the search for a validity-constrained certificate into an unconstrained minimization problem that can be optimized directly via standard gradient-based methods.

The remaining challenge is to bridge from a computational framework on a discretized type distribution to a valid revenue bound for optimal auction design on a continuous type distribution. For uniform value distributions, we introduce a \emph{lifting technique} that rigorously maps the dual certificate on a discretized version of the problem to the auction design problem on the continuous domain, yielding a sequence of decreasing revenue upper bounds as the grid is refined. For general value distributions (e.g., the beta distribution), we do not have an analogous lifting construction; instead, we prove that the discretized dual objective converges to a valid upper bound on the continuous optimal revenue in the limit of grid resolution.

We demonstrate the effectiveness of our method across a range of settings with independent, additive valuations, providing the first rigorous, near-optimal revenue certificates in multi-bidder settings.
For $2\times 2$ and $3\times 2$ ($n$ bidders $\times$ $m$ items)
settings, we achieve certified upper bounds that are within 1.8\% and 3.7\%, respectively, of the best known revenue for the primal problem  as given by GemNet~\cite{wang2024gemnet}. 
This gives the first formal link between these primal solutions and the theoretical optimum
and proves that state-of-the-art  mechanisms obtained through deep learning
are  near-optimal. Conversely, we show that primal baselines with approximate incentive compatibility, such as RegretNet~\citep{dutting2024optimal}, can yield revenues higher than our certified upper bound, confirming for the first time that
they  overestimate the 
optimal revenue  as a result of their incentive violations.

We also validate our dual framework in simpler, discrete type settings by exactly recovering the optimal revenue for the only known, analytical, multi-bidder, multi-item optimal auction~\citep{yao2017dominant}. The dual computational framework also derives a bound within 0.6\% of the analytically optimal revenue for the continuous type, single-bidder bundling setting of \citet{manelli2006bundling}.

\subsection{Related Work}

The application of deep learning to mechanism design, often termed \emph{differentiable economics}, has focused primarily on the primal problem: learning the allocation and payment rules directly. \citet{dutting2024optimal} introduced RegretNet, which models the mechanism as a neural network and minimizes expected regret to approximate incentive compatibility (IC). While successful empirically, RegretNet only satisfies IC approximately and does not provide revenue guarantees. The lack of regret guarantees motivated works \citep{curry2020certifying} that aims to certify the regret using integer program methods. However, the primal nature of these directions precludes a sufficient understanding of optimality, and because RegretNet allows incentive violations, it remains unclear whether the reported revenue by RegretNet is higher or lower than the theoretical optimum.

\citet{wang2024gemnet} proposed GemNet, which enforces exact strategy-proofness (DSIC) by using the multi-bidder menu formulation and a mixed integer linear programming post-processing step. While GemNet provides a  certified lower bound on the optimal revenue (by exhibiting a valid primal mechanism), it cannot certify how close revenue is to the true optimum. 

Early work in automated mechanism design relied on linear programming (LP) to design optimal mechanisms for discrete type spaces~\citep{conitzer2002, conitzer2004, guo2010computationally, sandholm2015automated}. These approaches solve the primal problem directly. While effective for small instances, they are fundamentally limited by the combinatorial explosion of the type space in multi-item, multi-bidder settings. The tabular nature of LPs require optimizing a distinct variable for every possible valuation profile. As the type space grows, the number of required variables and constraints exceeds the capacity of modern hardware and is intractable.

Our approach overcomes this challenge by parameterizing the  variables in the dual representation of the optimal auction problem via a neural network. Instead of storing a global lookup table of  variables, we represent the dual solution as a continuous mapping from valuation profiles to flow variables. This allows for the number of optimization parameters to be completely decoupled from the size of the type space. Furthermore, rather than striving to satisfy flow constraints that make a dual solution suitable for an upper bound, our architecture enforces flow conservation by construction. This transforms what would be an intractable constrained problem into a lower-dimensional unconstrained optimization that can be solved efficiently using stochastic gradient descent.

Our theoretical framework builds upon the existing duality theory of  optimal auction design. For the single-bidder, multi-item setting, \citet{daskalakis2015strong} and \citet{giannakopoulos2014duality} establish strong duality frameworks  through the lens of optimal transport. For multi-bidder settings and Bayesian incentive compatibility (BIC), \citet{cai2016duality} provide a unified duality framework for discrete type spaces, characterizing the optimal revenue via flow conservation constraints on the type graph.

Recent work has explored extending flow-based formulations to continuous domains. \citet{barber2020bounding} introduces continuous dual flow constructions for multi-bidder, additive buyers and two items, with continuous valuations and BIC,
while \citet{XYZ} extends this approach, also exploiting symmetries, to settings with multiple items (still BIC). In a related direction, \citet{kolesnikov2022beckmann} leverage continuous optimal transport to formulate a dual for multi-item, multi-bidder, BIC auctions. 
In addition to being BIC and not DSIC, 
the numerical solutions in~\citet{kolesnikov2022beckmann}, which  
apply to a discretized  formulation, 
do not establish either a rigorous upper or a rigorous lower bound on
the optimal continuous revenue. In contrast, we focus on the multi-bidder DSIC
regime and  take as our starting point a ``partial dual" formulation (that Lagrangifies only incentive constraints to preserve the structure of virtual welfare maximization) 
for a discrete type space,\footnote{Personal communication, S.~M.~Weinberg, E.~Xue, and E.~Ryu.}  adapting the flow conservation constraints of \citet{cai2016duality} from the BIC to the DSIC setting.

We also distinguish our  computational framework from prior theoretical results. \citet{yao2017dominant} analyzed DSIC vs. BIC gaps by deriving exact analytical formulas for distributions with binary support. While these results offer deep theoretical insights, they rely on the binary support condition.
\citet{zuo2017generalizing} explores a ``complete dual" formulation for multi-dimensional auctions. Unlike the ``partial dual" approach,
the complete dual also Lagrangifies the allocation feasibility constraints, resulting
in a formulation that is less amenable to deriving  certificates of near-optimality. Our work combines the partial dual setup with deep learning to compute certificates for  high-dimensional instances where analytical construction is infeasible.

\section{Preliminaries}\label{sec:pre}

\textbf{Sealed-Bid Auctions, DSIC, and IR}.\ \ 
We study sealed-bid auctions with $n$ bidders and $m$ items, and each bidder $i$ has a \emph{valuation function} $v_i:2^M\to\mathbb{R}\ge 0$. Denote the sets of bidders and items by $N$ and $M$, respectively. We focus on \emph{additive valuations}. Bidder $i$’s value for any subset $S\subseteq M$ is $v_i(S)=\sum_{j\in S} v_i({j})$, where $v_i({j})$ denotes the value of item $j\in M$ to bidder $i$. The valuation $v_i$ is drawn independently from a continuous distribution $D_i$ with CDF $F_i$ and PDF $f_i$. Correspondingly, the joint distribution is $D$, with CDF $F$, and PDF $f(v)=\prod_{i=1}^n f_i(v_i)$. $D_i$ may be an arbitrary joint distribution on $[0,v_{\max}]^m$; we require independence across bidders, but allow arbitrary correlation across items within a bidder. We assume valuations are bounded: for all $i$ and $j$, $v_i(j)\in[0,v_{\max}]$ with $v_{\max}>0$. Define value profile $v=(v_1,\dots,v_n)\in [0,v_{\max}]^{m \times n} =V$. Here, $V$ is the space of feasible valuation functions. We denote the value space of bidder $i$ by $V_i$. Subscripts $-i$, e.g., $v_{-i}$, denote all agents other than $i$.

A mechanism is specified by an allocation rule $x(\cdot)$ and a payment rule $p(\cdot)$. For each bidder $i$ and item $j$, $x_{ij}(v)$ is the probability that $i$ receives item $j$ under profile $v$, and $p_i(v)$ is the payment charged to bidder $i$. The mechanism is \emph{Dominant Strategy Incentive Compatible (DSIC)} if, for every bidder $i$, all $v_i$ and $v_i'$, and all $v_{-i}$,
\[
\sum_{j=1}^m x_{ij}(v)\cdot v_{ij}-p_i(v)\ge
\sum_{j=1}^m x_{ij}(v_i',v_{-i})\cdot v_{ij}-p_i(v_i',v_{-i}).
\]
In other words, reporting truthfully maximizes bidder $i$'s utility irrespective of others' reports. The mechanism is \emph{Individually Rational (IR)} if for all bidders $i$ and all profiles $v$,
\(
\sum_{j=1}^m x_{ij}(v)\cdot v_{ij}-p_i(v)\ge 0,
\)
so participation weakly benefits every bidder. For a DSIC/IR mechanism $(x,p)$, let $\mathrm{Rev}_{(x,p)}(D)$ denote its expected revenue under $D$, and let $\mathrm{Rev}(D)$ be the optimal revenue achievable by any DSIC/IR mechanism.

\begin{wrapfigure}[11]{r}{0.45\textwidth}
  \centering
  \vspace{-\baselineskip} %
  \includegraphics[width=\linewidth]{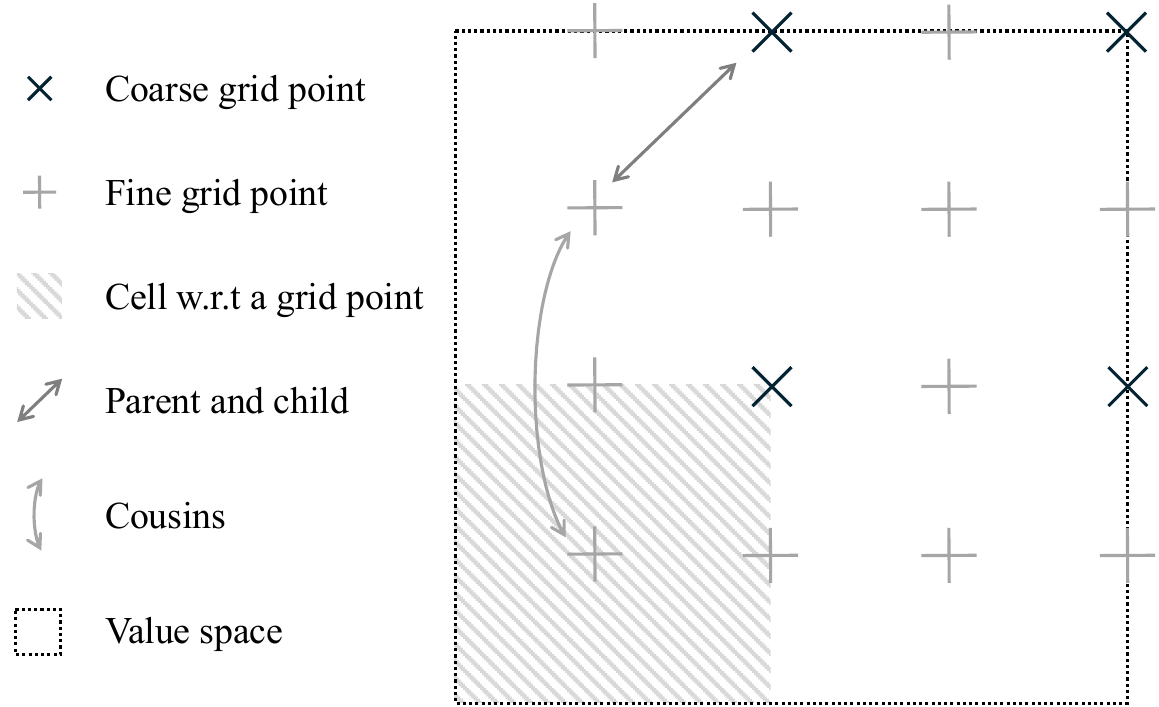}
  \caption{Our grid, refinement, and cell scheme.}
  \label{fig:legend}
\end{wrapfigure}
\noindent \textbf{Discretization and Refinement}.\ \ 
Let $d = n\times m$.
For an integer $\rho\ge 1$, set the mesh size $\Delta := v_{\max}/\rho$ and define the (coarse) uniform product grid
\(
\mathcal{G}_{\Delta}
\;:=\;
\{\Delta,2\Delta,\ldots,v_{\max}\}^{d}.
\)

\emph{Cell.} Define the coordinatewise rounding-up map $R_{\Delta}:[0,v_{\max}]^{d}\to\mathcal{G}_{\Delta}$ by
$(R_{\Delta}(v))_{ij} =\Delta\Big\lceil \frac{v_{ij}}{\Delta}\Big\rceil$, $i\in [n]$, $j\in[m]$.
Then $v \le R_{\Delta}(v)$, and $R_{\Delta}(v)$ is the unique grid point in
$\mathcal{G}_{\Delta}$ that is coordinatewise minimal among those dominating $v$. Define the cell of a grid point $w$ as $C(w)=\{v\in V|R_{\Delta}(v)=w\}$. 

As a discretization of the continuous distribution $f$, we denote the distribution over $\mathcal{G}_{\Delta}$ as $f^{\mathcal{G}_{\Delta}}$.
We have
\begin{align}
    f^{\mathcal{G}_{\Delta}}(w) = \int_{v\in C(w)} f(v) \mathrm{d}v, \qquad \forall w\in\mathcal{G}_{\Delta}.
\end{align}

\emph{Refinement}.
The {\em $K$-refined grid} (for integer $K\ge 2$) is obtained by subdividing each
coordinate interval into $K$ equal parts:
\(
\mathcal{G}_{\Delta/K}
\;:=\;
\{\tfrac{\Delta}{K},\tfrac{2\Delta}{K},\ldots,v_{\max}\}^{d}.
\)
Equivalently, the refinement factor $K$
decreases the mesh size from $\Delta$ to $\Delta/K$ in every coordinate and $\mathcal{G}_{\Delta}\subset \mathcal{G}_{\Delta/K}$.
We analogously define $R_{\Delta/K}$ by replacing $\Delta$ with $\Delta/K$.

\noindent \textbf{Dual, Lagrange, and Flow-Conservation.}
We first consider the problem of DSIC auction design on a discrete grid. For clarity, we denote the grid by $\mathcal{G}$ and the discretized distribution on it by $D^\mathcal{G}$, with PDF being $f^\mathcal{G}$. For this discrete setting, we derive an upper bound on $\mathrm{Rev}(D^\mathcal{G})$---the optimal revenue achievable by any DSIC/IR mechanism under the discretized distribution $D^\mathcal{G}$---based on a known duality framework.\footnote{Personal communication, S.~M.~Weinberg, E.~Xue, and E.~Ryu.} We will later (\cref{sec:method:lift,sec:method:continuous}) relate this discrete quantity to the continuous optimal revenue $\mathrm{Rev}(D)$.
For every bidder $i$ and every $v\in \mathrm{supp}(D^\mathcal{G})$, we introduce decision variables $x_{ij}(v)$ and $p_i(v)$. We include a special type $\varnothing$ representing non-participation, and enforce that for all bidders $i$ and all $v_{-i}$, $x_{ij}(\varnothing,v_{-i})=0$ for each item $j$ and $p_i(\varnothing,v_{-i})=0$. With this modeling choice, IR constraints can be expressed as DSIC constraints by allowing a bidder to misreport $\varnothing$. The resulting linear program is:
\begin{align}
& \text{Variables: } x_{ij}(v), p_i(v)\ \forall i\in [n], j\in [m], v\in \mathrm{supp}(D^\mathcal{G}) \nonumber\\
& \text{Maximize: } \sum_{v\in\mathcal{G}} f^\mathcal{G}(v)\cdot \sum_i p_i(v)\\
& \text{Subject to: } \sum_j x_{ij}(v)\cdot v_{ij}-p_i(v)\ge
\sum_j x_{ij}(v_i',v_{-i})\cdot v_{ij}-p_i(v_i',v_{-i}) \tag{DSIC constraints} \nonumber\\
&\qquad \qquad \qquad \qquad \forall i\in [n], v_i\in \mathrm{supp}(D^\mathcal{G}_i), v_i'\in \mathrm{supp}(D^\mathcal{G}_i)\cup\{\varnothing\},v_{-i}\in \mathrm{supp}(D^\mathcal{G}_{-i})\nonumber\\
& \qquad \qquad \ \ \sum_i x_{ij}(v)\le 1 \quad \forall j\in [m],\ v\in \mathrm{supp}(D^\mathcal{G}), && \tag{Feasibility constraints} \nonumber\\
& \qquad \qquad \ \ \ x_{ij}(\varnothing,v_{-i})=0;p_i(\varnothing,v_{-i})=0  \quad \forall i\in [n], j\in [m], v_{-i}\in \mathrm{supp}(D^\mathcal{G}_{-i}),\nonumber \\
& \qquad \qquad \ \ \ x_{ij}(v)\ge 0 \quad \forall i\in [n], j\in [m], v\in \mathrm{supp}(D^\mathcal{G}).\nonumber
\end{align}

Let $P_+(D^\mathcal{G})$ denote the {\em polytope of feasible solutions}. We dualize only the DSIC constraints. Let $P(D^\mathcal{G})$ be the polytope defined by the remaining constraints, and let $\lambda^{v_{-i}}_i(v_i,v_i')$ be the Lagrange multiplier for the DSIC constraint asserting that bidder $i$ with true type $v_i$ does not prefer to report $v_i'$ when others report $v_{-i}$. The partial Lagrangian, $\mathcal{L}^{\mathcal{G}}(\lambda,x,p)$, is
\begin{align*}
\mathcal{L}^\mathcal{G}(\lambda,x,p)
&=
\sum\nolimits_{\substack{i\in [n]\\ v\in \mathrm{supp}(D^\mathcal{G})}} f^\mathcal{G}(v)\cdot p_i(v) \\
&\quad +
\sum\nolimits_{\substack{i\in [n], v\in \mathrm{supp}(D^\mathcal{G})\\ v_i'\in \mathrm{supp}(D^\mathcal{G}_i)\cup\{\varnothing\}}}
\lambda^{v_{-i}}_i(v_i,v_i')
\Bigl(\sum_j v_{ij}\cdot (x_{ij}(v)-x_{ij}(v_i',v_{-i}))-(p_i(v)-p_i(v_i',v_{-i}))\Bigr) \\
&\hspace{-4em} =
\sum\nolimits_{i\in [n],  v\in \mathrm{supp}(D^\mathcal{G})} p_i(v)
\Bigl(
f^\mathcal{G}(v)+\sum\nolimits_{v_i'\in \mathrm{supp}(D^\mathcal{G}_i)} \lambda^{v_{-i}}_i(v_i',v_i)
-\sum\nolimits_{v_i'\in \mathrm{supp}(D^\mathcal{G}_i)\cup\{\varnothing\}} \lambda^{v_{-i}}_i(v_i,v_i')
\Bigr) \\
&\hspace{-4em}\quad +
\sum\nolimits_{\substack{i\in [n], j\in [m]\\ v\in \mathrm{supp}(D^\mathcal{G})}} x_{ij}(v)
\Bigl(
\sum\nolimits_{v_i'\in \mathrm{supp}(D^\mathcal{G}_i)\cup\{\varnothing\}} \lambda^{v_{-i}}_i(v_i,v_i')\cdot v_{ij}
-\sum\nolimits_{v_i'\in \mathrm{supp}(D^\mathcal{G}_i)} \lambda^{v_{-i}}_i(v_i',v_i)\cdot v_{ij}'
\Bigr).
\end{align*}

\emph{Flow conservation.} For a fixed bidder $i$ and a fixed $v_{-i}$, we say $\lambda^{v_{-i}}_i$ satisfies \emph{flow conservation} if
\begin{align}
    \sum_{v_i'\in \mathrm{supp}(D^\mathcal{G}_i)\cup\{\varnothing\}} \lambda^{v_{-i}}_i(v_i,v_i')
=
f^\mathcal{G}(v)+\sum_{v_i'\in \mathrm{supp}(D^\mathcal{G}_i)} \lambda^{v_{-i}}_i(v_i',v_i), \label{equ:fc}
\end{align}
for every $v_i\in \mathrm{supp}(D^\mathcal{G}_i)$. Equivalently, this describes conservation in a flow network with a super source $s$, a super sink $\varnothing$, nodes indexed by $v_i\in \mathrm{supp}(D^\mathcal{G}_i)$, edges $s\to v_i$ carrying $f^\mathcal{G}(v_i,v_{-i})$, and edges $v_i\to v_i'$ carrying $\lambda^{v_{-i}}_i(v_i,v_i')$ for $v_i'\in \mathrm{supp}(D^\mathcal{G}_i)\cup\{\varnothing\}$. A dual solution $\lambda$ is flow-conserving if this holds for every bidder $i$ and every $v_{-i}$. Flow conservation is important due to the following lemma. We give the proof  in~\cref{appx:proof}.
\begin{restatable}[Flow Conservation]{lemma}{flowconservation}
    $\max_{(x,p)\in P(D^\mathcal{G})} \mathcal L^\mathcal{G}(\lambda,x,p) < \infty$ iff $\lambda$ satisfies flow conservation.
\end{restatable}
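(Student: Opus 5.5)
The plan is to use the rearranged form of $\mathcal L^\mathcal{G}$ displayed just before the statement, in which the Lagrangian is linear in the payment variables $p_i(v)$ with coefficient
\[
c_i(v) := f^\mathcal{G}(v)+\sum\nolimits_{v_i'\in \mathrm{supp}(D^\mathcal{G}_i)} \lambda^{v_{-i}}_i(v_i',v_i)-\sum\nolimits_{v_i'\in \mathrm{supp}(D^\mathcal{G}_i)\cup\{\varnothing\}} \lambda^{v_{-i}}_i(v_i,v_i'),
\]
plus a term linear in the allocation variables $x_{ij}(v)$. The key structural observation is that $P(D^\mathcal{G})$, the region left after dualizing only the DSIC constraints, places \emph{no} constraints on $p_i(v)$ for $v$ in the support. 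The only remaining constraints are feasibility $\sum_i x_{ij}(v)\le 1$, nonnegativity $x_{ij}\ge 0$, and the normalization at $\varnothing$. So the payments are free real variables, while the allocations range over a bounded set. Flow conservation \eqref{equ:fc} at $(i,v_i,v_{-i})$ is exactly the statement $c_i(v)=0$.

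\emph{(If direction.)} Assume $\lambda$ is flow-conserving. Then every coefficient $c_i(v)$ vanishes, so the payment terms drop out of $\mathcal L^\mathcal{G}$. The Lagrangian reduces to a linear function of $x$ alone, with finite coefficients since the grid is finite and the multipliers are finite. This linear function is maximized over the compact polytope
\[
\Bigl\{x\ge 0:\ \textstyle\sum_i x_{ij}(v)\le 1\Bigr\}.
\]
On this set every $x_{ij}(v)\in[0,1]$, so the maximum is bounded by the finite sum of the absolute values of the $x$-coefficients. Hence the maximum is finite.

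\emph{(Only if direction.)} I argue by contrapositive. Suppose conservation fails at some $(i,v)$, so $c_i(v)\neq 0$. Fix any feasible $x$, for instance $x\equiv 0$, and set all other payments to $0$. Take $p_i(v)=t\,\mathrm{sign}(c_i(v))$. Then $\mathcal L^\mathcal{G}=|c_i(v)|\,t+\text{const}$, which tends to $\infty$ as $t\to\infty$, so the supremum is $+\infty$.

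The only step needing care is the bookkeeping of the $\varnothing$ type. Because $p_i(\varnothing,v_{-i})=0$ and $x_{ij}(\varnothing,v_{-i})=0$ are fixed, edges into $\varnothing$ contribute only to the outflow sum. No payment variable indexed by $\varnothing$ appears, which is why the inflow sum in \eqref{equ:fc} ranges over $\mathrm{supp}(D^\mathcal{G}_i)$ only. I would verify that the rearranged display correctly collects each $\lambda^{v_{-i}}_i(v_i,v_i')$ term once into the $p_i(v)$ coefficient and once into the $p_i(v_i',v_{-i})$ coefficient. Beyond that check, the argument is just the standard fact that a linear function of free variables is bounded above only if its coefficients vanish.
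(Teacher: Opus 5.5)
Your proposal is correct and follows the paper's argument. The paper's proof consists only of the observation that if conservation fails at some $(i,v)$, then $\mathcal{L}^\mathcal{G}$ is unbounded as a function of the unconstrained variable $p_i(v)$; you additionally make explicit the converse direction, where all payment coefficients vanish and what remains is a linear function of $x$ over the compact set $\{x\ge 0,\ \sum_i x_{ij}(v)\le 1\}$, which the paper leaves implicit.
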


\noindent \textbf{Virtual values}.
Given any flow-conserving $\lambda$, define a virtual value function $\Phi^\lambda$ by
\begin{equation} \label{eq:virtual_value_def}
\Phi^\lambda_{ij}(v) := v_{ij}-
\frac{1}{f^\mathcal{G}(v)}\sum_{v_i'\in \mathrm{supp}(D^\mathcal{G}_i)} \lambda^{v_{-i}}_i(v_i',v_i)(v_{ij}'-v_{ij}).
\end{equation}

\begin{restatable}[Virtual Welfare Upper Bounds Revenue]{lemma}{vw} If $\lambda$ satisfies flow conservation, then for every $(x,p)\in P_+(D^\mathcal{G})$, the revenue is upper bounded by the virtual welfare of $x$ w.r.t. $\Phi^\lambda$, i.e.,
\[
\sum_{v\in \mathrm{supp}(D^\mathcal{G})} f^\mathcal{G}(v)\cdot \sum_i p_i(v)
\le
\sum_{v\in \mathrm{supp}(D^\mathcal{G})} f^\mathcal{G}(v)\cdot \sum_i x_i(v)\cdot \Phi^\lambda_i(v).
\]
Moreover, equality holds precisely when, for all $i$, $v_i$, $v_i'$, and $v_{-i}$ with $\lambda^{v_{-i}}_i(v_i,v_i')>0$, the associated DSIC constraint is tight. If $(x^*,p^*)$ is an optimal DSIC/IR mechanism and $\lambda^*$ is an optimal dual solution, then the revenue of $(x^*,p^*)$ equals the virtual welfare of $x^*$ under $\Phi^{\lambda^*}$, and
\[
x^* \in \arg\max_{(x,p)\in P(D^\mathcal{G})}
\sum_{v\in \mathrm{supp}(D^\mathcal{G})} f^\mathcal{G}(v)\cdot \sum_i x_i(v)\cdot \Phi^{\lambda^*}_i(v).
\]
\end{restatable}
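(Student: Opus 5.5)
The plan is weak duality through the partial Lagrangian, then strong LP duality for the optimality claims.

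\emph{Step 1: Lagrangian dominates revenue on feasible mechanisms.} Fix a flow-conserving $\lambda\ge 0$ and any $(x,p)\in P_+(D^\mathcal{G})\subseteq P(D^\mathcal{G})$. In the first expression for $\mathcal{L}^\mathcal{G}(\lambda,x,p)$, the second sum has terms $\lambda^{v_{-i}}_i(v_i,v_i')\ge 0$ multiplying DSIC slacks that are $\ge 0$. Hence
\[
\mathcal{L}^\mathcal{G}(\lambda,x,p)\ge \sum_{v} f^\mathcal{G}(v)\sum_i p_i(v),
\]
with equality iff every DSIC constraint with $\lambda^{v_{-i}}_i(v_i,v_i')>0$ is tight. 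This is exactly the stated equality condition.

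\emph{Step 2: Lagrangian equals virtual welfare.} Use the regrouped expression for $\mathcal{L}^\mathcal{G}$. Flow conservation \eqref{equ:fc} says each coefficient of $p_i(v)$ is zero, so the payment terms drop out. Terms involving $(\varnothing,v_{-i})$ vanish because $x_{ij}(\varnothing,v_{-i})=p_i(\varnothing,v_{-i})=0$.

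For the coefficient of $x_{ij}(v)$, use \eqref{equ:fc} again to replace the total outflow $\sum_{v_i'\in\mathrm{supp}\cup\{\varnothing\}}\lambda^{v_{-i}}_i(v_i,v_i')$ by $f^\mathcal{G}(v)+\sum_{v_i'}\lambda^{v_{-i}}_i(v_i',v_i)$. The coefficient then becomes
\[
f^\mathcal{G}(v)v_{ij}-\sum_{v_i'}\lambda^{v_{-i}}_i(v_i',v_i)(v_{ij}'-v_{ij}) = f^\mathcal{G}(v)\Phi^\lambda_{ij}(v),
\]
by \eqref{eq:virtual_value_def}. So $\mathcal{L}^\mathcal{G}(\lambda,x,p)=\sum_v f^\mathcal{G}(v)\sum_i x_i(v)\cdot\Phi^\lambda_i(v)$. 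Combining with Step 1 gives the inequality and its tightness characterization. This index bookkeeping is the only technical point: one must be careful that the sums over $v_i'$ include $\varnothing$ only on the outflow side.

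\emph{Step 3: optimal primal and dual pair.} The LP is feasible (take $x=p=0$) and bounded (revenue is at most $v_{\max}m$ by IR and feasibility). Lagrangifying only the DSIC constraints over the polytope $P(D^\mathcal{G})$ is a valid LP partial dualization. Hence strong duality holds:
\[
\mathrm{Rev}(D^\mathcal{G})=\min_{\lambda\ge0}\max_{(x,p)\in P(D^\mathcal{G})}\mathcal{L}^\mathcal{G}(\lambda,x,p),
\]
and the minimum is attained at some $\lambda^*$. By the Flow Conservation Lemma, $\lambda^*$ is flow-conserving, since otherwise the inner maximum is infinite.

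By Step 2, the inner maximum equals $\max_{x}\sum_v f^\mathcal{G}\sum_i x_i\cdot\Phi^{\lambda^*}_i$, since $p$ no longer appears. Now chain the inequalities for the optimal mechanism $(x^*,p^*)$:
\[
\mathrm{Rev}(x^*,p^*)\le \mathcal{L}^\mathcal{G}(\lambda^*,x^*,p^*)\le \max_{P(D^\mathcal{G})}\mathcal{L}^\mathcal{G}(\lambda^*,\cdot)=\mathrm{Rev}(D^\mathcal{G})=\mathrm{Rev}(x^*,p^*).
\]
All inequalities are therefore equalities. The first equality says revenue equals virtual welfare under $\Phi^{\lambda^*}$. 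The second says $x^*$ attains the maximum of virtual welfare over $P(D^\mathcal{G})$, which is the argmax claim.
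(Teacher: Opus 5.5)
Your proposal is correct and takes the same route as the paper's sketched proof. You use flow conservation to collapse the partial Lagrangian into the virtual-welfare expression, note that the multiplier-weighted DSIC slacks are nonnegative on $P_+(D^\mathcal{G})$ and vanish exactly under complementary slackness, and then invoke LP strong duality for the optimal pair, spelling out the coefficient bookkeeping and the equality chain that the paper leaves implicit.
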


This yields the dual representation
\begin{equation}
\label{eq:dual}
\Rev( D^\mathcal{G})
~=
\min_{\lambda~\text{flow-conserving}}
\; \E_{v\sim  D^\mathcal{G}}\Biggl[\max_{x\in P( D^\mathcal{G})}
\sum_{i=1}^n\sum_{j=1}^m x_{ij}(v) \Phi_{ij}^{\lambda}(v)\Biggr].
\end{equation}
Because $x$ decouples across items and $\sum_i x_{ij}(v)\le 1$ with $x_{ij}(v)\ge 0$,
for every fixed $(\lambda,v)$, the maximizer allocates item $j$ to a bidder with the
largest (nonnegative) virtual value:
\begin{equation}
\label{eq:maxx}
\max_{x\in P( D^\mathcal{G})}\sum_{i=1}^n\sum_{j=1}^m x_{ij}(v) \Phi_{ij}^{\lambda}(v)
=\sum_{j=1}^m \max\{0,~\max_{i\in[n]}\Phi_{ij}^{\lambda}(v)\}=\sum_{j=1}^m (\max_{i\in[n]}\Phi_{ij}^{\lambda}(v))_+.
\end{equation}

\section{Methodology}
\label{sec:method}

The computational framework that we develop first derives a certified upper bound on $\mathrm{Rev}(D^\mathcal{G})$ for a discrete-type problem by minimizing the dual objective derived in \cref{eq:dual} (\cref{sec:method:flow}). For uniform value distributions, we then introduce a  lifting technique (\cref{sec:method:lift}) that, given a feasible dual solution on a coarse grid, $\mathcal{G}^\mathrm{C}$, produces a feasible dual solution on any refinement, $\mathcal{G}^\mathrm{F}$, whose dual objective is no larger. This makes the resulting sequence of bounds, $\mathrm{Rev}(D^\mathrm{F})$,  monotone non-increasing as the grid is refined. In \cref{sec:method:continuous}, we connect these discrete bounds to the continuous optimal revenue, $\mathrm{Rev}(D)$. For uniform distributions, the lifting construction extends directly to the continuous domain, yielding a rigorous upper bound on $\mathrm{Rev}(D)$ from any coarse-grid solution (\cref{Thm:ct}). For general value distributions, we do not have an analogous lifting construction. Instead, we prove that the discretized dual objective, plus an explicit mesh-dependent correction term, yields a valid upper bound on $\mathrm{Rev}(D)$, with the correction vanishing as $\Delta \to 0$ (\cref{thm:coarse-plus-error}).

\subsection{Neural Network Flow Parameterization via Absorbing Markov Chains}
\label{sec:method:flow}

We first discuss how to compute a certified upper bound for a discrete-type problem on a fixed grid, $\mathcal{G}$, with PMF $f^\mathcal{G}$. The question of how such a grid relates to an underlying continuous distribution is deferred to \cref{sec:method:lift}.

The first challenge is the scalability issue, as the number of Lagrange multipliers increases with the discretization resolution.
To deal with this, we generate the Lagrange multipliers using a neural network $g_\theta: [0,v_{\max}]^{(n-1)m}\rightarrow \mathbb{R}^{(T'+1) \times T'}$, where $\theta$ is learnable parameters and $T'$ is the number of types. The neural network's goal is to learn the optimal dual variables $\lambda_i^{v_{-i}}(v_i,v_i')$. $g_\theta$ takes as input the value profile of other agents $v_{-i}$ and outputs the raw parameters defining a flow routing policy. Specifically, the network predicts a matrix of transition logits $L \in \mathbb{R}^{T' \times T'}$, determining the transition probabilities
between two discretized types $(u, w)$, and a vector of sink logits $s \in \mathbb{R}^{T'}$, determining the transition probabilities
from each type $u$ to $\varnothing$.

We do not directly output Lagrange multipliers because of the strict flow conservation requirement (\cref{equ:fc}). The dual variables $\lambda$ must form a valid, flow-conserving solution for every bidder $i$ and every $v_{-i}$. To strictly guarantee flow conservation, we enforce this property structurally by interpreting the neural network outputs, $L$ and $s$, as absorbing Markov chains.

Specifically, for each type node $u$ in the grid $\mathcal{G}_i$, we apply a softmax over the transition logits $L_{u,\cdot}$ (corresponding to the $T'$ outgoing type-to-type edges) and the sink logit $s_u$ (corresponding to the additional ``sink" edge).
This yields transition probabilities $\Pi_i^{v_{-i}}(u,w)$ for all $w\in\mathcal{G}_i$ and sink-transition probabilities $\alpha_i^{v_{-i}}(u)$, satisfying
\[
\sum_{w\in \mathcal{G}_i}\Pi_i^{v_{-i}}(u,w) \;=\; 1-\alpha_i^{v_{-i}}(u), \qquad \forall u \in \mathcal{G}_i.
\]
To ensure that $\lambda$ can be solved, we enforce a uniform leakage condition:
$\alpha_i^{v_{-i}}(u)\ge \varepsilon$ for all $u$, for a small constant $\varepsilon>0$. Under this condition, every row of $\Pi_i^{v_{-i}}$ sums to at most $1-\varepsilon$, so the chain on $\mathcal G_i$ is absorbing (all mass exits to the sink with probability $1$). Equivalently, $\Pi_i^{v_{-i}}$ is a \emph{sub-stochastic} transition matrix on $\mathcal G_i$.

This parameterization allows us to solve for a unique, feasible flow consistent with the routing policy generated by $g_\theta$.
Fix bidder $i$ and $v_{-i}$.
Let $\nu^{v_{-i}}\in\mathbb R^{T'}$ denote the vector of super-source inflows, whose entries are $f^{\mathcal{G}}(u,v_{-i})$, for each $u\in \mathcal G_i$.
Let $\mu^{v_{-i}}\in\mathbb R^{T'}$ be the vector of node outflows, whose entries are the total flow routed out of each $u\in \mathcal G_i$.
Flow conservation requires that total outflow equals super-source inflow plus routed inflow from other nodes:
\[
\mu^{v_{-i}} \;=\; \nu^{v_{-i}} + \left(\Pi_i^{v_{-i}}\right)^\top \mu^{v_{-i}}.
\]
Since the chain is absorbing (every row of $\Pi_i^{v_{-i}}$ sums to at most $1-\varepsilon$, $0<\varepsilon<1$), the matrix $I-(\Pi_i^{v_{-i}})^\top$ is invertible. We thus obtain the unique, non-negative outflow vector $\mu$:
\begin{equation}
\mu^{v_{-i}}
=
\left(I - (\Pi_i^{v_{-i}})^\top\right)^{-1} \nu^{v_{-i}}.
\label{eq:linear_solve}
\end{equation}

Finally, the explicit edge flows are recovered by scaling routing probabilities by node outflow:
for $w\in\mathcal G_i$, we set $\lambda_i^{v_{-i}}(u,w)=\mu^{v_{-i}}(u)\,\Pi_i^{v_{-i}}(u,w)$, and the super-sink flow is
$\lambda_i^{v_{-i}}(u,\varnothing)=\mu^{v_{-i}}(u)\,\alpha_i^{v_{-i}}(u)$. This construction guarantees that the resulting dual variables $\lambda$ satisfy flow conservation exactly (up to floating-point precision) by construction. By backpropagating through the linear system solution in \cref{eq:linear_solve}, we can train the network parameters $\theta$ to minimize the dual objective while remaining on the manifold of feasible dual solutions.

\subsection{Stochastic Training Objective}\label{sec:method:training}

During the training phase, our objective is to minimize the dual upper bound $J(\lambda_\theta)$. For multi-bidder auctions, the size of the joint type space grows exponentially with the number of bidders, making exact summation intractable. We therefore employ stochastic optimization. Unlike tabular LP methods that scale with the size of the profile space, our neural network parameterization scales with the complexity of the routing function, allowing us to handle type spaces that are inaccessible to classical solvers. In each training step, we sample a mini-batch of valuation profiles $v \sim D^\mathcal{G}$. For each sampled profile, we compute the feasible flow $\lambda$ using the linear system described in \cref{sec:method:flow}. We then calculate the resulting virtual values $\Phi^\lambda(v;\theta)$ according to \cref{eq:virtual_value_def}.

The loss function is defined as the empirical estimate of the expected virtual welfare. Specifically, for a batch $\mathcal{B}$, we minimize
\begin{equation}
    \mathcal{L}(\theta) = \frac{1}{|\mathcal{B}|} \sum_{v \in \mathcal{B}} \sum_{j=1}^m \max \left( 0, \max_{k \in [n]} \Phi^\lambda_{kj}(v;\theta) \right).
\end{equation}

While this phase relies on stochastic approximation to update the weights, the ``hard constraint'' architecture ensures that even solutions during the early stages of training are valid dual certificates. A suboptimal network may yield a loose upper bound, but it will never produce an invalid one.

\subsection{Tightened Certification via Lifting for Uniform Distribution}
\label{sec:method:lift}

We now consider a continuous value distribution, $D$ (assumed uniform throughout this subsection) and two  different discretizations: a coarse grid, $\mathcal{G}^\mathrm{C}$, and a $K$-fold refinement, $\mathcal{G}^\mathrm{F}$, constructed as in \cref{sec:pre}. Let $D^\mathrm{C}$ and $D^\mathrm{F}$ denote the discretizations induced on these grids, with PMFs $f^\mathrm{C}$ and $f^\mathrm{F}$. Since refining the grid introduces computational costs---a refinement factor $K$ multiplies the joint type space by $K^{nm}$—we cannot directly solve the dual at arbitrarily fine resolution. Lifting offers a workaround: given a feasible dual solution $\lambda^\mathrm{C}$ on $\mathcal{G}^\mathrm{C}$, we construct a feasible $\lambda^\mathrm{F}$ on $\mathcal{G}^\mathrm{F}$ whose dual objective is never worse. Under the uniform distribution, this yields a sequence of optimal revenue, $\mathrm{Rev}(D^\mathrm{F})$, that decreases monotonically as the grid is refined.

Because $D$ is uniform, the induced PMF is constant on the coarse grid; i.e., $f^{\mathrm{C}}(w_a)=f^{\mathrm{C}}(w_b)$ for every pair of coarse types $w_a, w_b$. We now define the parent map.

\begin{definition}[Parent Map]
    Every $v\in \mathcal{G}^\mathrm{F}$ has a unique \emph{parent}, $R(v)\in \mathcal{G}^{\mathrm{C}}$, obtained by rounding
to the coordinatewise minimal
coarse grid point dominating $v$.  We write $w=R(v)$, and have $w\ge v$ coordinate-wise. We write $S(w)=\{v|v\in \mathcal{G}^\mathrm{F}, R(v)=w\}$ as the set of child grid points of $w$. This parent-child relationship and the grid structure are illustrated in \cref{fig:legend}.
\end{definition}

For every coarse type $w$ and a fine type $v^{(k)} = w - \delta^{(k)} \in S(w)\subset \mathcal{G}^\mathrm{F}$, it follows that  $f^{\mathrm{F}}(v^{(k)}) = \frac{1}{K^{mn}} f^{\mathrm{C}}(w)$, where $k$ is an index and $\delta^{(k)} \ge 0$ is a constant for all $w$.  Moreover, the following \emph{pushforward identity} holds:
\begin{equation}
\label{eq:push_n}
f^{\mathrm{C}}(w)
=\sum_{v\in S(w)} f^\mathrm{F}(v),
\qquad\forall\,w\in \mathcal{G}^{\mathrm{C}},
\end{equation}
for $v=(v_i,v_{-i})$, $w=(w_i,w_{-i})$, and writing $w_i=R(v_i)$ and
$w_{-i}=R(v_{-i})$ with a slight abuse of notation.
\begin{definition}[Cousin]
    $v\in \mathcal{G}^{\mathrm{F}}$ is a \emph{cousin} of $v'\in \mathcal{G}^{\mathrm{F}}$ if and only if $v$ and $v'$ are of the same relative position with respect to their parents; i.e., $v$ and $v'$ have the same index $k$ (see \cref{fig:legend}).
\end{definition}

\subsubsection{Lifting}

Suppose that we have a solution $\lambda_{i}^{\mathrm{C}}$ on a coarse grid. For every coarse edge $\lambda_{i}^{\mathrm{C},w_{-i}}(w_b, w_a)$, we create $K^m$ fine edges:
\begin{align}
    \lambda_{i}^{\mathrm{F},v_{-i}}(v_b^{(k)}, v_a^{(k)}) := \frac{1}{K^{mn}} \lambda_{i}^{\mathrm{C},w_{-i}}(w_b, w_a), \quad  R(v_a^{(k)})=w_a, R(v_b^{(k)})=w_b, k \in \{1, \dots, K^m\}.    \label{eq:lift-type_uniform}
\end{align}

The sink flow in the fine grid is constructed similarly: 
\begin{align}
    \lambda_{i}^{\mathrm{F},v_{-i}}(v_a^{(k)}, \varnothing) := \frac{1}{K^{mn}} \lambda_{i}^{\mathrm{C},w_{-i}}(w_a, \varnothing).    \label{eq:lift-sink_uniform}
\end{align}

\begin{lemma}[Fine flow conservation]\label{lem:flow-conservation_uniform}
The $\lambda^{\mathrm{F}}$ defined in Eq.~\ref{eq:lift-type_uniform} and Eq.~\ref{eq:lift-sink_uniform}
satisfies flow conservation for $D^{\mathrm{F}}$, i.e., for every $i$, fine
$v_{-i}$ and $a_i\in\supp(D^{\mathrm{F}}_i)$,
\[
\sum_{b_i\in\supp(D^{\mathrm{F}}_i)\cup\{\varnothing\}}
\lambda_{i}^{\mathrm{F},\,v_{-i}}(a_i,b_i)
\;=\;
f^{\mathrm{F}}(a_i,v_{-i})
\;+\;
\sum_{b_i\in\supp(D^{\mathrm{F}}_i)}
\lambda_{i}^{\mathrm{F},\,v_{-i}}(b_i,a_i).
\]
\end{lemma}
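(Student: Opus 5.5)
The plan is to reduce the fine conservation identity at $(a_i,v_{-i})$ to the coarse identity at the parents $(w_a,w_{-i})$, with $w_a=R(a_i)$ and $w_{-i}=R(v_{-i})$. The coarse identity holds because $\lambda^{\mathrm{C}}$ is flow-conserving on $\mathcal{G}^{\mathrm{C}}$. The whole argument is bookkeeping on the lifting rule: each coarse edge with endpoint $w_a$ is mapped to exactly one fine edge at $a_i$, and every term is scaled by the common factor $K^{-mn}$.

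Fix $i$, a fine $v_{-i}$ and $a_i\in\supp(D^{\mathrm{F}}_i)$, and let $k$ be the index of $a_i$ within $S(w_a)$, so that $a_i=w_a-\delta^{(k)}$ in the bidder-$i$ coordinates. By \cref{eq:lift-type_uniform}, the only nonzero fine edges leaving $a_i$ toward types go to cousins $b_i=w_b-\delta^{(k)}$ with the same index $k$, one for each coarse $w_b$. The map $w_b\mapsto b_i$ is a bijection from $\supp(D^{\mathrm{C}}_i)$ onto these cousins, and the corresponding fine edge carries weight $K^{-mn}\lambda_i^{\mathrm{C},w_{-i}}(w_a,w_b)$. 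All other fine edges out of $a_i$ are zero: an edge from $a_i$ to a fine type of a different index is never created. I will state this convention explicitly, so that $\lambda^{\mathrm{F}}$ is taken to be $0$ on pairs not created by the lifting. Adding the sink edge from \cref{eq:lift-sink_uniform} gives
\[
\sum_{b_i}\lambda_i^{\mathrm{F},v_{-i}}(a_i,b_i)=K^{-mn}\sum_{w_b\in\supp(D^{\mathrm{C}}_i)\cup\{\varnothing\}}\lambda_i^{\mathrm{C},w_{-i}}(w_a,w_b).
\]
The same argument applied to incoming edges at $a_i$ gives
\[
\sum_{b_i}\lambda_i^{\mathrm{F},v_{-i}}(b_i,a_i)=K^{-mn}\sum_{w_b}\lambda_i^{\mathrm{C},w_{-i}}(w_b,w_a).
\]

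For the source term, uniformity gives $f^{\mathrm{F}}(a_i,v_{-i})=K^{-mn}f^{\mathrm{C}}(w_a,w_{-i})$, as noted before \cref{eq:push_n}. Substituting these three expressions, the fine conservation identity becomes $K^{-mn}$ times the coarse conservation identity \cref{equ:fc} at $(w_a,w_{-i})$, which holds by hypothesis.

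The main obstacle is confirming the bijection claim, since the lifting definition is terse. Two points need care. First, the $K^m$ fine copies of each coarse edge must be indexed by the bidder-$i$ cousin index $k$, so that within each fixed $v_{-i}$ each fine node receives exactly one copy of each incident coarse edge, not zero or several. Second, the same coarse multiplier $\lambda^{\mathrm{C},w_{-i}}$ must be used for every fine $v_{-i}\in S(w_{-i})$. Under this reading the edge counts match exactly, and the proof is complete with no further computation.
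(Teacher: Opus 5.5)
Your proposal is correct and takes the same route as the paper. Under the cousin-indexed lifting, the fine outflow, inflow, and source mass at $a_i$ are each exactly $K^{-mn}$ times their coarse counterparts at $R(a_i)$, so fine flow conservation is $K^{-mn}$ times coarse flow conservation. The conventions you make explicit are the ones the paper leaves implicit: $\lambda^{\mathrm{F}}$ is zero on pairs the lifting never creates, the copies are indexed by the bidder-$i$ cousin index, and the same $\lambda^{\mathrm{C},w_{-i}}$ is reused for every fine $v_{-i}$ with $R(v_{-i})=w_{-i}$.
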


\begin{proof}
Fix any child $v_a^{(k)}, k \in \{1, \dots, K^m\}$. $v_a^{(k)}$ only receives flow from the $k$-th children   of each of source coarse grid points $w_b$:
\begin{align}
    \text{In}^{\mathrm{F}}(v_a^{(k)}) = \sum_{w_b} \lambda_{i}^{\mathrm{F},v_{-i}}(v_b^{(k)}, v_a^{(k)}) = \frac{1}{K^{mn}} \sum_{w_b} \lambda_{i}^{\mathrm{C},w_{-i}}(w_b, w_a) = \frac{1}{K^{mn}} \text{In}^{\mathrm{C}}(w_a).
\end{align}

Meanwhile, $v_a^{(k)}$ only sends flow to the $k$-th children of each of target grid point $w_c$ (or the sink):
\begin{align}
    \text{Out}^{\mathrm{F}}(v_a^{(k)}) = \sum_{w_c} \lambda_{i}^{\mathrm{F},v_{-i}}(v_a^{(k)}, v_c^{(k)}) + \lambda_{i}^{\mathrm{F},v_{-i}}(v_a^{(k)}, \varnothing) = \frac{1}{K^{mn}} \text{Out}^{\mathrm{C}}(w_a).
\end{align}
By coarse  flow conservation, this equals $\frac{1}{K^{mn}} f^{\mathrm{C}}(w_a)$, which is exactly $f^{\mathrm{F}}(v_a^{(k)})$:
\begin{align}
    \text{Out}^{\mathrm{F}}(v_a^{(k)}) - \text{In}^{\mathrm{F}}(v_a^{(k)}) &= \frac{1}{K^{mn}} \left( \text{Out}^{\mathrm{C}}(w_a) - \text{In}^{\mathrm{C}}(w_a) \right) = \frac{1}{K^{mn}} f^{\mathrm{C}}(w_a) = f^{\mathrm{F}}(v_a^{(k)}).
\end{align}
\end{proof}

\begin{lemma}[Virtual Value Pointwise Dominance]\label{lem:dominance_uniform}
    With $\lambda^{\mathrm{F}}$ defined in Eq.~\ref{eq:lift-type_uniform} and Eq.~\ref{eq:lift-sink_uniform},
    \begin{align}
        \Phi_{ij}^{\lambda^{\mathrm{F}}}(v_a^{(k)}) \le \Phi_{ij}^{\lambda^{\mathrm{C}}}(w_a), \quad \forall\, v_a^{(k)}\in S(w_a).
    \end{align}
\end{lemma}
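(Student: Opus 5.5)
We compute $\Phi_{ij}^{\lambda^{\mathrm{F}}}(v_a^{(k)})$ directly from the definition in \cref{eq:virtual_value_def} and compare it term by term with $\Phi_{ij}^{\lambda^{\mathrm{C}}}(w_a)$. Fix bidder $i$, a fine profile $v_{-i}$ with coarse parent $w_{-i}=R(v_{-i})$, and a fine type $v_a^{(k)}=w_a-\delta^{(k)}\in S(w_a)$. By the lifting rule \cref{eq:lift-type_uniform}, the only fine edges entering $v_a^{(k)}$ come from the cousins $v_b^{(k)}=w_b-\delta^{(k)}$, one for each coarse $w_b$, with weight $\frac{1}{K^{mn}}\lambda_i^{\mathrm{C},w_{-i}}(w_b,w_a)$. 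We also have $f^{\mathrm{F}}(v_a^{(k)},v_{-i})=\frac{1}{K^{mn}}f^{\mathrm{C}}(w_a,w_{-i})$ by uniformity. Substituting into the definition gives
\[
\Phi_{ij}^{\lambda^{\mathrm{F}}}(v_a^{(k)},v_{-i})
= v_{a,ij}^{(k)}-\frac{1}{f^{\mathrm{C}}(w_a,w_{-i})}\sum_{w_b}\lambda_i^{\mathrm{C},w_{-i}}(w_b,w_a)\bigl(v_{b,ij}^{(k)}-v_{a,ij}^{(k)}\bigr).
\]

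The key observation is that cousins share the same offset $\delta^{(k)}$. Hence for every coordinate $j$ we have $v_{b,ij}^{(k)}-v_{a,ij}^{(k)}=w_{b,ij}-w_{a,ij}$: the offsets cancel exactly. The information-rent term is therefore identical to the coarse one, and
\[
\Phi_{ij}^{\lambda^{\mathrm{F}}}(v_a^{(k)},v_{-i})=\Phi_{ij}^{\lambda^{\mathrm{C}}}(w_a,w_{-i})-\delta^{(k)}_{ij}.
\]
Since $\delta^{(k)}\ge 0$, i.e.\ $v_a^{(k)}\le w_a$ coordinatewise because the parent dominates the child, the claimed inequality follows. In fact we get the sharper statement that the gap is exactly $w_{a,ij}-v_{a,ij}^{(k)}\in[0,\Delta)$.

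The only delicate point is bookkeeping rather than any analytic estimate. Three facts need to be verified:
\begin{itemize}
\item The lifted flow indexed by the fine $v_{-i}$ uses the coarse multipliers at $w_{-i}=R(v_{-i})$. This holds by construction.
\item No fine edges into $v_a^{(k)}$ originate from non-cousins. This holds because \cref{eq:lift-type_uniform} only creates edges between children with the same index $k$; all other fine multipliers are zero.
\item The normalizing densities scale by the same factor $K^{mn}$ as the flows. This is exactly where uniformity of $D$ is used.
\end{itemize}
The sink edges do not enter $\Phi$ at all, so \cref{eq:lift-sink_uniform} plays no role here beyond ensuring feasibility via \cref{lem:flow-conservation_uniform}.
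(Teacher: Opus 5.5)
Your proposal is correct and follows the paper's proof essentially line for line. You substitute the common $1/K^{mn}$ scaling of $f^{\mathrm{F}}$ and $\lambda^{\mathrm{F}}$ into the virtual value, use the constant-offset property of cousins ($v_{b,ij}^{(k)}-v_{a,ij}^{(k)}=w_{b,ij}-w_{a,ij}$), and conclude $\Phi_{ij}^{\lambda^{\mathrm{F}}}(v_a^{(k)})=\Phi_{ij}^{\lambda^{\mathrm{C}}}(w_a)-\delta_{ij}^{(k)}\le \Phi_{ij}^{\lambda^{\mathrm{C}}}(w_a)$; your three bookkeeping remarks make explicit what the paper uses implicitly.
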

\begin{proof}
We first calculate the fine virtual value for $v_a^{(k)}$
\begin{align}
    \Phi_{ij}^{\lambda^{\mathrm{F}}}(v_a^{(k)}) = v_{a,ij}^{(k)} - \frac{1}{f^{\mathrm{F}}(v_a^{(k)})} \sum_{w_b} \lambda_{i}^{\mathrm{F},v_{-i}}(v_{b,i}^{(k)}, v_{a,i}^{(k)}) \left( v_{b,ij}^{(k)} - v_{a,ij}^{(k)} \right).
\end{align}

Substitute $f^{\mathrm{F}} = \frac{1}{K^{mn}} f^{\mathrm{C}}$ and $\lambda^{\mathrm{F},v_{-i}} = \frac{1}{K^{mn}} \lambda^{\mathrm{C},w_{-i}}$
\begin{align}
    \Phi_{ij}^{\lambda^{\mathrm{F}}}(v_a^{(k)}) = v_{a,ij}^{(k)} - \frac{1}{f^{\mathrm{C}}(w_a)} \sum_{w_b} \lambda_{i}^{\mathrm{C}}(w_{b,i}, w_{a,i}) \left( v_{b,ij}^{(k)} - v_{a,ij}^{(k)} \right).    
\end{align}

Now, use the constant-gap property: $v_{b}^{(k)} = w_{b} - \delta^{(k)}$ and $v_{a}^{(k)} = w_{a} - \delta^{(k)}$. The difference is
\begin{align}
    v_{b,ij}^{(k)} - v_{a,ij}^{(k)} = w_{b,ij} - w_{a,ij}
\end{align}

The "jump size" is identical to the coarse jump. Plugging this back in, we have
\begin{align}
    \Phi_{ij}^{\lambda^{\mathrm{F}}}(v_a^{(k)}) & = (w_{a,ij} - \delta_{ij}^{(k)}) - \underbrace{\frac{1}{f^{\mathrm{C}}(w_a)} \sum \lambda_{i}^{\mathrm{C},w_{-i}} (w_{b,ij} - w_{a,ij})}_{\text{Coarse Penalty }}  = \Phi_{ij}^{\lambda^{\mathrm{C}}}(w_a) - \delta_{ij}^{(k)}.
\end{align}
Since $\delta_{ij}^{(k)} \ge 0$ for all $k$, it follows that:
\begin{align}
    \Phi_{ij}^{\lambda^{\mathrm{F}}}(v) \le \Phi_{ij}^{\lambda^{\mathrm{C}}}(R(v)).\label{eq:virt-dom}
\end{align}
\end{proof}

\begin{theorem}[Revenue Upper Bound for the Uniform Distribution]\label{lem:bound_uniform}
    With $\lambda^{\mathrm{F},v_{-i}}$ defined in Eq.~\ref{eq:lift-type_uniform} and Eq.~\ref{eq:lift-sink_uniform},
    \begin{align}
        \Rev\!\left(D^{\mathrm{F}}\right) \ \le\ \Rev\!\left(D^{\mathrm{C}}\right).
    \end{align}
\end{theorem}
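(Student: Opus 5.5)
We will prove the theorem by weak duality on the fine grid, combined with the lifted certificate and the two lemmas just established. By \cref{lem:flow-conservation_uniform}, $\lambda^{\mathrm{F}}$ is flow-conserving for $D^{\mathrm{F}}$. Hence the virtual-welfare upper bound lemma applies: for every $(x,p)\in P_+(D^{\mathrm{F}})$, the fine revenue is at most the fine virtual welfare under $\Phi^{\lambda^{\mathrm{F}}}$. Maximizing over allocations pointwise as in \cref{eq:maxx}, we get
\[
\Rev(D^{\mathrm{F}}) \le \sum_{v\in\mathcal{G}^{\mathrm{F}}} f^{\mathrm{F}}(v)\sum_{j=1}^m \Bigl(\max_{i\in[n]}\Phi^{\lambda^{\mathrm{F}}}_{ij}(v)\Bigr)_+ .
\]

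Next we compare this fine objective to the coarse one. By \cref{lem:dominance_uniform}, for each $v\in S(w)$ and each $i,j$ we have $\Phi^{\lambda^{\mathrm{F}}}_{ij}(v)\le \Phi^{\lambda^{\mathrm{C}}}_{ij}(w)$. Since $x\mapsto(\max_i x_i)_+$ is monotone, each summand for $v$ is bounded by the corresponding coarse summand at $R(v)=w$. Grouping the fine sum by parents and using the pushforward identity \cref{eq:push_n}, $\sum_{v\in S(w)}f^{\mathrm{F}}(v)=f^{\mathrm{C}}(w)$, the fine objective is at most
\[
\sum_{w\in\mathcal{G}^{\mathrm{C}}} f^{\mathrm{C}}(w)\sum_j\Bigl(\max_i\Phi^{\lambda^{\mathrm{C}}}_{ij}(w)\Bigr)_+ ,
\]
which is the coarse dual objective $J(\lambda^{\mathrm{C}})$.

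Finally, we take $\lambda^{\mathrm{C}}$ to be an optimal flow-conserving dual on the coarse grid. The minimum in \cref{eq:dual} is attained because the underlying primal is a finite LP, so LP strong duality gives attainment. With this choice the dual representation \cref{eq:dual} gives $J(\lambda^{\mathrm{C}})=\Rev(D^{\mathrm{C}})$, which concludes the proof.

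The main point needing care is that the lifting must be well-defined: for fine $v_{-i}$, the coarse multiplier used is the one at $w_{-i}=R(v_{-i})$. We also need to check that the virtual value at a fine profile $(v_i,v_{-i})$ only involves edges within the same cousin class. Both facts are already built into the statements of \cref{lem:flow-conservation_uniform,lem:dominance_uniform}, so we invoke them directly. The only other subtle point is the strong-duality/attainment step, for which we rely on the finite-LP duality underlying \cref{eq:dual}.
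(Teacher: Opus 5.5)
Your proposal is correct and follows essentially the same route as the paper. You lift an optimal coarse dual and get fine feasibility from \cref{lem:flow-conservation_uniform}. You then bound the fine dual objective by the coarse one using the pointwise dominance of \cref{lem:dominance_uniform}, monotonicity of $x\mapsto(\max_i x_i)_+$, and the pushforward identity \cref{eq:push_n}, and close with the dual representation \cref{eq:dual}. Splitting the last step into weak duality on the fine grid and attained LP strong duality on the coarse grid is just a more explicit version of the paper's ``min over fine $\le$ min over coarse'' step.
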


By Eq.~\ref{eq:virt-dom},
\[
\sum_{j=1}^m \max\{0,\max_i \Phi_{ij}^{\lambda^{\mathrm{F}}}(v)\}
\ \le\
\sum_{j=1}^m \max\{0,\max_i \Phi_{ij}^{\lambda^{\mathrm{C}}}(R(v))\},
\qquad \forall v\in\supp(D^{\mathrm{F}}).
\]
Taking expectation over $v\sim D^{\mathrm{F}}$ and using the pushforward identity Eq.~\ref{eq:push_n},
\begin{align*}
&\E_{v\sim D^{\mathrm{F}}}\!\left[\sum_{j} \max\{0,\max_i \Phi_{ij}^{\lambda^{\mathrm{F}}}(v)\}\right]
\le
\E_{v\sim D^{\mathrm{F}}}\!\left[\sum_{j} \max\{0,\max_i \Phi_{ij}^{\lambda^{\mathrm{C}}}(R(v))\}\right] \\
=& 
\E_{w\sim D^{\mathrm{C}}}\mathbb{E}_{v\in S(w)}\!\left[\sum_{j} \max\{0,\max_i \Phi_{ij}^{\lambda^{\mathrm{C}}}(w)\}\right]
= \E_{w\sim D^{\mathrm{C}}}\!\left[\sum_{j} \max\{0,\max_i \Phi_{ij}^{\lambda^{\mathrm{C}}}(w)\}\right].
\end{align*}
The LHS is the dual value of a feasible dual solution for the fine distribution, and the RHS is the dual value of the corresponding coarse distribution solution.

Hence,
\begin{align}
    \min_{\lambda^{\mathrm{F}}\ \text{flow}} \E_{v\sim D^{\mathrm{F}}}\!\left[\sum_{j}
    \max\{0,\max_i \Phi_{ij}^{\lambda^{\mathrm{F}}}(v)\}\right]
    \ \le\
    \min_{\lambda^{\mathrm{C}}\ \text{flow}} \E_{w\sim D^{\mathrm{C}}}\!\left[\sum_{j}
    \max\{0,\max_i \Phi_{ij}^{\lambda^{\mathrm{C}}}(w)\}\right],    \label{eq:uniform_domi_raw}
\end{align}
which is exactly $\Rev\!\left(D^{\mathrm{F}}\right) \ \le\ \Rev\!\left(D^{\mathrm{C}}\right)$ by the dual representation \cref{eq:dual} for each of the discrete problems. \cref{eq:uniform_domi_raw} follows by constructing a fine-grid dual variable $\lambda^\mathrm{F}$ from the coarse-grid optimal $\lambda^\mathrm{C}$ (RHS). The lifted $\lambda^\mathrm{F}$ is feasible for the fine problem, and its dual objective value upper bounds the optimal dual objective of the fine problem (LHS).
More specifically,
\begin{align}
    \Rev\!\left(D^{\mathrm{F}}\right) \le \E_{v\sim D^{\mathrm{F}}}\!\left[\sum_{j} \max\{0,\max_i \Phi_{ij}^{\lambda^{\mathrm{F}}}(v)\}\right] 
    = \E_{w\sim D^{\mathrm{C}}}\E_{v\sim S(w)}\!\left[\sum_{j} \max\{0,\max_i \left(\Phi_{ij}^{\lambda^{\mathrm{C}}}(w)-(w_{ij}-v_{ij})\right)\}\right].\nonumber
\end{align}

\section{Upper Bound on Continuous Revenue}\label{sec:method:continuous}

Having established the properties of optimal revenue for dual solutions computed on discrete  grids, we now generalize the lifting procedure to the continuous domain, establishing that the revenue bounds apply to the original, continuous distribution.  For uniform distributions, we obtain a direct, certified upper bound on the continuous revenue from any coarse-grid solution. For general distributions, we establish that the discretized bound converges to a valid continuous bound as the grid is refined, with an explicit error term that vanishes with the mesh size.

\subsection{Model the Lagrange Multipliers as Kernels}

In the discrete problem on a grid $\mathcal{G}$, dual variables $\lambda_i^{\mathcal{G}, t_{-i}}(t_i,t_i')$, where $(t_i,t_{-i})\in\mathcal{G}$, $(t_i',t_{-i})\in\mathcal{G}$ are Lagrange multipliers attached to each DSIC constraint.  These dual variables can be learned by deep neural networks for general distributions, and can be further lifted for the uniform distribution (\cref{sec:method:lift}). In continuous type spaces, we model them as \emph{kernels} (or {\em measures}). Going forward, we use $t\in \mathcal{G}$ to denote a discretized type and $v_i$ to denote a continuous type, for clarity. The kernel formulation in this subsection applies to general value distributions.
\begin{definition}[Dual Kernel]
Fix $i\in[n]$. For almost every $v_{-i}$, let
\[
\lambda_i^{v_{-i}}:\ V_i\times \mathcal{B}(V_i\cup\{\varnothing\})\to [0,\infty]
\]
be a nonnegative kernel, so that for each $v_i$, we have 
$A\mapsto \lambda_i^{v_{-i}}(v_i,A)$ is a finite measure on
$V_i\cup\{\varnothing\}$, and for each measurable $A$, we have
$v_i\mapsto \lambda_i^{v_{-i}}(v_i,A)$ is measurable.    
\end{definition}

We say $\lambda$ \emph{conserves flow} if for every bidder $i$ and almost every
$v_{-i}$, for almost every $v_i$,
\begin{equation}\label{eq:flow-cont}
\lambda_i^{v_{-i}}(v_i,V\cup\{\varnothing\})
\;=\;
f(v_i,v_{-i}) \;+\; \int_{V_i}\lambda_i^{v_{-i}}(u_i,\mathrm{d}v_i) \mathrm{d}u_i.
\end{equation}

The expected revenue is
\[
\Rev^{(x,p)}(D):=\int_V f(v)\Big(\sum_{i=1}^n p_i(v)\Big) \dV,
\qquad
\Rev(D):=\sup_{\text{DSIC,IR,feasible}\ (x,p)}\Rev^{(x,p)}(D).
\]

\subsection{Lifting Discrete Solutions to a Continuous Flow-Conserving Kernel}

We now build a continuous, feasible $\lambda$ from a discrete feasible $\lambda^\mathcal{G}$ on grid $\mathcal{G}$ by \emph{spreading}
each coarse flow uniformly inside each cell in the discrete grid with respect to the
\emph{conditional density}. The results in this subsection apply to general value distributions.
For bidder $i$ define, for almost every $v_i$, the conditional density 
\(
s_i(v_i)
:=
\frac{f_i(v_i)}{f^\mathcal{G}_i(R(v_i))}.
\)
Then for every cell, $C_i(t_i)$, 
\[
\int_{C_i(t_i)} s_i(v_i) \mathrm{d}v_i
=
\frac{1}{f^\mathcal{G}_i(t_i)}\int_{C_i(t_i)} f_i(v_i) \mathrm{d}v_i
=1, \qquad \forall t_i\in\mathcal{G}.
\]

Similarly, for $v_{-i}$ define
\[
s_{-i}(v_{-i})
:=
\frac{f_{-i}(v_{-i})}{f^\mathcal{G}_{-i}(R(v_{-i}))},
\qquad
f^\mathcal{G}_{-i}(t_{-i})=\prod_{k\ne i}f^\mathcal{G}_k(t_k).
\]
Then $\int_{C_{-i}(t_{-i})} s_{-i}(v_{-i}) \mathrm{d}v_{-i}=1$ for each cell of
others.

\begin{definition}[Lifted Kernel]
    Fix bidder $i$. For almost every $v_{-i}$, let $t_{-i}:=R(v_{-i})$.
For $v_i\in V_i$ and a measurable set $A\subseteq V_i$, define
\begin{equation}\label{eq:lift-kernel}
\lambda_i^{v_{-i}}(v_i,A)
:=
s_{-i}(v_{-i}) s_i(v_i) 
\sum_{t_i'\in\mathcal{G}}
\lambda^{\mathcal{G},t_{-i}}_i(R(v_i),t_i') 
\int_{A\cap C_i(t_i')} s_i(u_i) \mathrm{d}u_i,
\end{equation}
and define the mass to the sink by
\begin{equation}\label{eq:lift-kernel-sink}
\lambda_i^{v_{-i}}(v_i,\{\varnothing\})
:=
s_{-i}(v_{-i}) s_i(v_i) 
\lambda^{\mathcal{G},t_{-i}}_i(R(v_i),\varnothing).
\end{equation}
\end{definition}

\begin{restatable}[Lifted $\lambda$ conserves flow]{lemma}{lcf}
    \label{lem:lift-flow}
    Assume $\lambda^\mathcal{G}$ satisfies the discrete flow conservation. Then the lifted kernel $\lambda$ defined by
\cref{eq:lift-kernel}--\cref{eq:lift-kernel-sink} satisfies the continuous
flow conservation \cref{eq:flow-cont} for all bidders.
\end{restatable}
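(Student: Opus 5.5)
We verify \cref{eq:flow-cont} by direct computation: fix a bidder $i$, a profile $v_{-i}$ off a null set, and set $t_{-i}=R(v_{-i})$. We compute the two sides separately, express each as the common factor $s_{-i}(v_{-i})s_i(v_i)$ times a discrete quantity at the node $t_i=R(v_i)$, and then apply the discrete flow conservation \cref{equ:fc} at $(t_i,t_{-i})$. The one identity used throughout is that $s_i$ integrates to one over each cell $C_i(t_i')$, together with the fact that the cells $\{C_i(t_i')\}_{t_i'\in\mathcal{G}}$ partition $V_i$ up to a null set. We read the left-hand side of \cref{eq:flow-cont} as $\lambda_i^{v_{-i}}(v_i,V_i\cup\{\varnothing\})$.

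\textbf{Outflow.} Take $A=V_i$ in \cref{eq:lift-kernel}. Each integral $\int_{V_i\cap C_i(t_i')}s_i$ equals $1$. Adding the sink mass \cref{eq:lift-kernel-sink} gives
\[
\lambda_i^{v_{-i}}(v_i,V_i\cup\{\varnothing\})=s_{-i}(v_{-i})s_i(v_i)\sum_{t_i'\in\mathcal{G}\cup\{\varnothing\}}\lambda_i^{\mathcal{G},t_{-i}}(t_i,t_i').
\]

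\textbf{Inflow.} The term $\int_{V_i}\lambda_i^{v_{-i}}(u_i,\mathrm{d}v_i)\,\mathrm{d}u_i$ has to be read as a density in $v_i$. It is the Radon--Nikodym derivative with respect to Lebesgue measure of the measure $A\mapsto\int_{V_i}\lambda_i^{v_{-i}}(u_i,A)\,\mathrm{d}u_i$. Fixing this interpretation carefully is the main subtlety; the algebra itself is routine.

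Compute that measure by splitting the $u_i$-integral over the cells $C_i(t_i'')$. On each cell, $R(u_i)=t_i''$ is constant and $\int_{C_i(t_i'')}s_i(u_i)\,\mathrm{d}u_i=1$. Hence
\[
\int_{V_i}\lambda_i^{v_{-i}}(u_i,A)\,\mathrm{d}u_i=s_{-i}(v_{-i})\sum_{t_i''}\sum_{t_i'}\lambda_i^{\mathcal{G},t_{-i}}(t_i'',t_i')\int_{A\cap C_i(t_i')}s_i.
\]
Its density at $v_i$ keeps only the terms with $t_i'=R(v_i)=t_i$, giving
\[
s_{-i}(v_{-i})\,s_i(v_i)\sum_{t_i''\in\mathcal{G}}\lambda_i^{\mathcal{G},t_{-i}}(t_i'',t_i).
\]
Measurability and Fubini are justified because every quantity is a finite nonnegative sum of measurable products.

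\textbf{Source term.} By independence across bidders,
\[
f(v_i,v_{-i})=f_i(v_i)f_{-i}(v_{-i})=s_i(v_i)\,s_{-i}(v_{-i})\,f^{\mathcal{G}}_i(t_i)\,f^{\mathcal{G}}_{-i}(t_{-i})=s_i(v_i)\,s_{-i}(v_{-i})\,f^{\mathcal{G}}(t_i,t_{-i}).
\]
Here we use $f^{\mathcal{G}}(t)=\prod_k f^{\mathcal{G}}_k(t_k)$, which holds because the cells are products and $f$ factorizes. This exhibits the same prefactor $s_{-i}(v_{-i})s_i(v_i)$.

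\textbf{Conclusion.} After dividing out the common prefactor, the required identity is exactly \cref{equ:fc} at the node $t_i$ for the fixed $t_{-i}$, which holds by assumption. Points where $f^{\mathcal{G}}_i(R(v_i))=0$ carry no density and lie in a null set, so the identity holds for almost every $v_i$. This holds for every bidder $i$ and almost every $v_{-i}$, which proves the lemma.
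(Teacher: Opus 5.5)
Your proposal is correct and follows the paper's proof essentially step for step. You compute the outflow using $\int_{C_i(t_i')}s_i=1$, factor the source term as $s_{-i}(v_{-i})s_i(v_i)f^{\mathcal{G}}(t_i,t_{-i})=f(v)$, obtain the inflow density $s_{-i}(v_{-i})s_i(v_i)\sum_{t_i'}\lambda_i^{\mathcal{G},t_{-i}}(t_i',t_i)$, and close with discrete flow conservation at $(t_i,t_{-i})$. The only cosmetic difference is that you take the Radon--Nikodym density after integrating out $u_i$, whereas the paper takes the density of each $\lambda_i^{v_{-i}}(u_i,\cdot)$ on $C_i(t_i)$ first and then integrates; the two agree by Tonelli.
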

We defer the proof to \cref{appx:proof}.

\noindent\textbf{Virtual Values}.\ \ 
Given any flow-conserving $\lambda$, define (for almost every $v$ with $f(v)>0$)
\begin{equation}\label{eq:virt-cont}
\Phi_{ij}^{\lambda}(v)
\;:=\;
v_{ij}
\;-\;
\frac{1}{f(v)}
\int_{V_i} (u_{ij}-v_{ij}) \lambda_i^{v_{-i}}(u_i,\mathrm{d}v_i)\mathrm{d}u_i.
\end{equation}

\begin{restatable}[Continuous Weak Duality]{lemma}{cwd}\label{lem:cont-thm24}
Assume $\lambda$ conserves flow in the sense of \cref{eq:flow-cont}.
Then for every DSIC/IR feasible mechanism $(x,p)$,
\begin{equation}\label{eq:cont-thm24}
\int_V f(v)\Big(\sum_{i=1}^n p_i(v)\Big) \dV
\;\le\;
\int_V f(v)\Big(\sum_{i=1}^n\sum_{j=1}^m x_{ij}(v) \Phi_{ij}^\lambda(v)\Big) \dV.
\end{equation}
Consequently,
\[
\Rev(D)\;\le\;
\sup_{x\ \text{feasible}}\int_V f(v)\Big(\sum_{i,j}x_{ij}(v)\Phi_{ij}^\lambda(v)\Big) \dV
\;=\;
\int_V f(v)\Big(\sum_{j=1}^m \max_{i\in[n]}(\Phi_{ij}^\lambda(v))_+\Big) \dV.
\]
\end{restatable}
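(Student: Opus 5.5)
The plan is to mimic the discrete proof of the virtual-welfare lemma: write the Lagrangian in continuous form, use flow conservation to make the payment terms cancel, and use DSIC/IR to sign the multiplier terms. Fix a DSIC/IR feasible $(x,p)$, extended by $x_{ij}(\varnothing,v_{-i})=0$ and $p_i(\varnothing,v_{-i})=0$. For each bidder $i$ and almost every $v_{-i}$, the DSIC constraint (or IR, for the sink) gives, for every $u_i$ and every $v_i\in V_i\cup\{\varnothing\}$,
\[
\sum_j u_{ij}\bigl(x_{ij}(u_i,v_{-i})-x_{ij}(v_i,v_{-i})\bigr)-\bigl(p_i(u_i,v_{-i})-p_i(v_i,v_{-i})\bigr)\ge 0 .
\]
I integrate this nonnegative quantity against the nonnegative measure $\lambda_i^{v_{-i}}(u_i,\mathrm{d}v_i)\,\mathrm{d}u_i$ over $V_i\times(V_i\cup\{\varnothing\})$, then over $v_{-i}$, and sum over $i$. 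This yields a quantity $\Lambda\ge 0$, so
\[
\int_V f\sum_i p_i \;\le\; \int_V f\sum_i p_i+\Lambda .
\]

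Next I expand $\Lambda$ by Fubini/Tonelli into four pieces.
\begin{itemize}
\item The ``$-p_i(u_i,v_{-i})$'' piece integrates against the total mass $\lambda_i^{v_{-i}}(u_i,V_i\cup\{\varnothing\})$.
\item The ``$+p_i(v_i,v_{-i})$'' piece integrates against the inflow measure $\int_{u_i}\lambda_i^{v_{-i}}(u_i,\mathrm{d}v_i)\,\mathrm{d}u_i$ on $V_i$. The sink contributes nothing because $p_i(\varnothing,\cdot)=0$.
\end{itemize}
By flow conservation \cref{eq:flow-cont}, read as an identity of densities/measures in $v_i$, the outflow equals $f(v)$ plus the inflow. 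The two payment pieces plus the revenue term therefore cancel exactly.

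The allocation pieces remain. The ``$+u_{ij}x_{ij}(u_i,v_{-i})$'' piece gives $\int x_{ij}(u)\,u_{ij}\,\lambda_i(u_i,V_i\cup\{\varnothing\})$, which by conservation equals $\int x_{ij}(v)\,v_{ij}\bigl(f(v)+\text{inflow}\bigr)$. The ``$-u_{ij}x_{ij}(v_i,v_{-i})$'' piece is an integral over inflow edges weighted by $u_{ij}$; the sink contributes nothing since $x_{ij}(\varnothing,\cdot)=0$. Combining the two gives
\[
\int f(v)\,x_{ij}(v)\Bigl(v_{ij}-\tfrac{1}{f(v)}\int(u_{ij}-v_{ij})\,\lambda_i^{v_{-i}}(u_i,\mathrm{d}v_i)\,\mathrm{d}u_i\Bigr),
\]
which is exactly $\int f\,x_{ij}\,\Phi^\lambda_{ij}$ by \cref{eq:virt-cont}. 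This proves \cref{eq:cont-thm24}.

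The main obstacle is justifying these manipulations measure-theoretically.
\begin{itemize}
\item Each piece must be finite before splitting, so that no $\infty-\infty$ arises. Since $x\in[0,1]$ and values are bounded by $v_{\max}$, it suffices that $\lambda$ has finite total mass (integrable outflow). I would either assume this, or note that the lifted kernel of \cref{lem:lift-flow} has finite mass. I would also argue that $p_i$ is bounded (by IR and DSIC, $0\le p_i\le m v_{\max}$ for the relevant types, or simply restrict to bounded mechanisms without loss of revenue).
\item The inflow measure must have a density for $\Phi^\lambda$ in \cref{eq:virt-cont} to make sense. I would interpret \cref{eq:flow-cont} as saying the inflow is absolutely continuous, with density given by outflow minus $f$.
\item Where $f(v)=0$, the integrand is handled by conservation, since the inflow there is dominated by the outflow.
\end{itemize}

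For the consequence, take the supremum over feasible $(x,p)$ on the left. On the right, replace the integrand by its pointwise maximum over $\{x\ge 0,\ \sum_i x_{ij}\le 1\}$. Exactly as in \cref{eq:maxx}, this maximum is $\sum_j\max_i(\Phi^\lambda_{ij})_+$, and it is attained by a measurable selection: give item $j$ to the lowest-index argmax when that value is positive. This shows the supremum equals the stated integral.
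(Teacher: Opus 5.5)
Your proposal is correct and follows essentially the same route as the paper: integrate the DSIC/IR inequalities against $\lambda_i^{v_{-i}}(u_i,\mathrm{d}v_i)\,\mathrm{d}u_i$, cancel the payment terms via flow conservation \cref{eq:flow-cont}, read off $\int f\,x_{ij}\Phi^\lambda_{ij}$ from the allocation terms, and maximize pointwise as in \cref{eq:maxx}; your explicit second use of conservation on the allocation terms is a step the paper leaves implicit. Your finiteness bookkeeping is more careful than the paper's bare appeal to Tonelli, but IR and DSIC alone give only $p_i\le m v_{\max}$, not $p_i\ge 0$ (lump-sum payments to bidders are allowed), so rely on your fallback instead: shift $p_i(\cdot,v_{-i})$ up by the utility of the zero type, which is the minimum utility by DSIC monotonicity, and this preserves DSIC/IR, weakly raises revenue, and yields $0\le p_i\le m v_{\max}$.
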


\begin{proof}
DSIC says that for every $i$, almost every $v_{-i}$, 
every $v_i\in V_i$, and every report $v_i'\in V_i\cup\{\varnothing\}$,
\[
\sum_{j=1}^m x_{ij}(v_i,v_{-i}) v_{ij}-p_i(v_i,v_{-i}) \;\ge\; \sum_{j=1}^m x_{ij}(v_i',v_{-i}) v_{ij}-p_i(v_i',v_{-i}).
\]
Rearranging,
\begin{equation}\label{eq:DSIC-rearr}
\sum_{j=1}^m v_{ij}\big(x_{ij}(v_i,v_{-i})-x_{ij}(v_i',v_{-i})\big)
\;-\;
\big(p_i(v_i,v_{-i})-p_i(v_i',v_{-i})\big)
\;\ge\;0.
\end{equation}
Multiply \cref{eq:DSIC-rearr} by the nonnegative measure
$\lambda_i^{v_{-i}}(v_i,\mathrm{d}v_i')$ and integrate on $v_i'$.
Since the integrand is nonnegative, Tonelli's theorem applies and yields, for
each $i$ and almost every $v_{-i}$ and every $v_i$,
\begin{align}
0
&\le
\int_{V_i\cup\{\varnothing\}}
\Bigg[
\sum_{j=1}^m v_{ij}\big(x_{ij}(v_i,v_{-i})-x_{ij}(v_i',v_{-i})\big)
-\big(p_i(v_i,v_{-i})-p_i(v_i',v_{-i})\big)
\Bigg] 
\lambda_i^{v_{-i}}(v_i,\mathrm{d}v_i').
\label{eq:integrated-dsic}
\end{align}
Now integrate \cref{eq:integrated-dsic} over $v_i$ and $v_{-i}$:
\begin{align}
0
&\le
\int_{V_{-i}}\int_{V_i}
\int_{V_i\cup\{\varnothing\}}
\Bigg[
\sum_{j=1}^m v_{ij}\big(x_{ij}(v_i,v_{-i})-x_{ij}(v_i',v_{-i})\big)
-\big(p_i(v_i,v_{-i})-p_i(v_i',v_{-i})\big)
\Bigg] 
\lambda_i^{v_{-i}}(v_i,\mathrm{d}v_i') \mathrm{d}v_i \mathrm{d}v_{-i}.
\label{eq:triple}
\end{align}

Add the revenue to both sides, i.e., add
$\int_V f(v)p_i(v) \dV$ (and later sum over $i$).
Define
\[
\mathcal{R}_i := \int_V f(v)p_i(v) \dV.
\]
Then \cref{eq:triple} implies
\begin{align}
\mathcal{R}_i
&\le
\mathcal{R}_i
+
\int_{V_{-i}}\int_{V_i}
\int_{V_i\cup\{\varnothing\}}
\Bigg[
\sum_{j=1}^m v_{ij}\big(x_{ij}(v)-x_{ij}(v_i',v_{-i})\big)
-\big(p_i(v)-p_i(v_i',v_{-i})\big)
\Bigg] 
\lambda_i^{v_{-i}}(v_i,\mathrm{d}v_i') \mathrm{d}v_i \mathrm{d}v_{-i}.
\label{eq:addrev}
\end{align}
We now expand the right-hand side and collect the payment terms. 
Because all terms are integrable on the bounded support and $\lambda$ is finite by assumption, Tonelli's theorem lets us regroup the iterated integrals and collect, for almost every $v$, the coefficient of
$p_i(v)$, which is
\[
f(v)
+
\int_{V_i}\lambda_i^{v_{-i}}(u_i,\mathrm{d}v_i)\mathrm{d}u_i
-
\lambda_i^{v_{-i}}(v_i,V_i\cup\{\varnothing\}).
\]
By flow conservation \cref{eq:flow-cont}, this coefficient equals $0$ for a.e.
$v$. Hence all payment terms cancel, and \cref{eq:addrev} reduces to
\begin{align}
\mathcal{R}_i
&\le
\int_{V_{-i}}\int_{V_i}
\sum_{j=1}^m x_{ij}(v) 
\Bigg[
v_{ij}
-
\frac{1}{f(v)}
\int_{V_i}(u_{ij}-v_{ij}) \lambda_i^{v_{-i}}(u_i,\mathrm{d}v_i)\,\mathrm{d}u_i
\Bigg]
f(v) \mathrm{d}v_i \mathrm{d}v_{-i}
\nonumber\\
&=
\int_V f(v)\sum_{j=1}^m x_{ij}(v) \Phi_{ij}^{\lambda}(v) \dV,
\label{eq:rev-to-virt}
\end{align}
where $\Phi^\lambda$ is exactly \cref{eq:virt-cont}.
Summing \cref{eq:rev-to-virt} over $i$ gives \cref{eq:cont-thm24}.

Finally, for each fixed profile $v$, the maximization over feasible $x(\cdot)$
of $\sum_{i,j}x_{ij}(v)\Phi_{ij}^\lambda(v)$ subject to
$x_{ij}(v)\ge 0$ and $\sum_i x_{ij}(v)\le 1$ is separable across items $j$ and
equals $\sum_j \max_i(\Phi_{ij}^\lambda(v))_+$.
This
yields the final displayed bound, with a detailed derivation in \cref{appx:proof}.

\end{proof}

\cref{lem:cont-thm24} says  any flow-conserving $\lambda$ is a
dual-feasible certificate for the continuous problem, and its induced virtual
values yield an explicit revenue upper bound.  This result holds for any continuous distribution, not only the uniform case.

\subsection{Upper Bound on Continuous Revenue under Uniform Distribution}\label{sec:method:cub}

For uniform value distributions, we combine the discrete lifting result (\cref{lem:bound_uniform}), which shows that finer grids yield tighter bounds, with continuous weak duality (\cref{lem:cont-thm24}) to obtain a certified upper bound on the continuous distribution optimal revenue directly from any coarse-grid dual solution.
\begin{restatable}[Upper Bound on Continuous Revenue under Uniform Distribution]{theorem}{ubu}
When the distribution is uniform, the upper bound on the revenue of a DSIC, multi-bidder, multi-item auction is:
    \begin{align}
\Rev\!\left(D\right) 
\le & \E_{w\sim D^{\mathrm{C}}}\E_{v\sim C(w)}\!\left[\sum_{j} \max\{0,\max_i \left(\Phi_{ij}^{\lambda^{\mathrm{C}}}(w)-(w_{ij}-v_{ij})\right)\}\right],
\end{align}
where $\lambda^{\mathrm{C}}$ is a solution for a discretized problem.
\label{Thm:ct}
\end{restatable}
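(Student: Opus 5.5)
The plan is to take the coarse dual solution $\lambda^{\mathrm{C}}$, lift it to a continuous kernel $\lambda$ via the Lifted Kernel construction \cref{eq:lift-kernel}--\cref{eq:lift-kernel-sink}, and then apply continuous weak duality (\cref{lem:cont-thm24}). The bound will follow once we compute the continuous virtual values $\Phi^\lambda_{ij}(v)$ in closed form. By \cref{lem:lift-flow}, $\lambda$ conserves flow in the sense of \cref{eq:flow-cont}, so \cref{lem:cont-thm24} applies and gives
\[
\Rev(D)\le \int_V f(v)\sum_j \max_i(\Phi^\lambda_{ij}(v))_+\,\mathrm{d}v .
\]

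The main step is to show that for a.e.\ $v\in C(w)$ we have $\Phi^\lambda_{ij}(v)=\Phi^{\lambda^{\mathrm{C}}}_{ij}(w)-(w_{ij}-v_{ij})$. This is the continuous analogue of \cref{lem:dominance_uniform}. Under the uniform distribution, $s_i\equiv 1/\mathrm{vol}(C_i)$ and $s_{-i}\equiv 1/\mathrm{vol}(C_{-i})$ are constant. We also have $f(v)=f^{\mathrm{C}}(w)/\mathrm{vol}(C(w))$, so $s_i s_{-i}=f(v)/f^{\mathrm{C}}(w)$ and $s_{-i}=f(v)/(f^{\mathrm{C}}(w)s_i)$. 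Plugging \cref{eq:lift-kernel} into \cref{eq:virt-cont}, the inflow measure into $v_i$ from sources $u_i\in C_i(t_i')$ has density $s_{-i}s_i(u_i)\lambda^{\mathrm{C},w_{-i}}_i(t_i',w_i)s_i(v_i)$.

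This naive kernel spreads mass from a source cell uniformly over the target cell. The penalty integral then becomes
\[
\frac{1}{f^{\mathrm{C}}(w)}\sum_{t_i'}\lambda^{\mathrm{C}}_i(t_i',w_i)\bigl(\bar u_{ij}-v_{ij}\bigr),
\]
where $\bar u$ is the average of $u_i$ over $C_i(t_i')$, i.e.\ $t_i'-\Delta/2$. This does not give the exact constant-shift identity. I therefore expect the obstacle to be that the kernel as literally defined yields $\Phi^\lambda_{ij}(v)=\Phi^{\lambda^{\mathrm{C}}}_{ij}(w)-(w_{ij}-v_{ij})+c$, with a nonzero $c$ depending on the centroid offset, rather than the stated formula.

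To recover the exact formula, I would instead use the continuous ``cousin'' lift, the limit $K\to\infty$ of \cref{eq:lift-type_uniform}. Here $\lambda_i^{v_{-i}}(u_i,\cdot)$ places mass $\lambda^{\mathrm{C}}_i(R(u_i),t_i')\,s_is_{-i}$ as a point mass at the translate $u_i-R(u_i)+t_i'$, and the sink mass follows \cref{eq:lift-kernel-sink}. Flow conservation then follows exactly as in \cref{lem:flow-conservation_uniform}, since translation preserves Lebesgue measure and the constant scaling factor matches $f(v)/f^{\mathrm{C}}(w)$. The jump $u_{ij}-v_{ij}$ then equals $t'_{ij}-w_{ij}$ exactly, and the penalty term becomes precisely the coarse penalty. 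This gives $\Phi^\lambda_{ij}(v)=v_{ij}-(\text{coarse penalty})=\Phi^{\lambda^{\mathrm{C}}}_{ij}(w)-(w_{ij}-v_{ij})$.

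The theorem then follows by taking $(\cdot)_+$ of the max over $i$, summing over $j$, and integrating. Splitting $\int_V f=\sum_w f^{\mathrm{C}}(w)\,\E_{v\sim C(w)}$, which holds because the conditional law on a cell is uniform, gives exactly the stated right-hand side. The uniform assumption is used in two places: to make $s_i$ constant, so translation lifting conserves flow, and to make the conditional within each cell uniform.
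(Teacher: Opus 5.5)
Your proposal is correct, and at the decisive step it departs from the paper's proof in a way that closes a gap in that proof. The paper lifts $\lambda^{\mathrm{C}}$ with the spreading kernel \cref{eq:lift-kernel}--\cref{eq:lift-kernel-sink} and simply asserts that this kernel satisfies $\Phi^\lambda_{ij}(v)=\Phi^{\lambda^{\mathrm{C}}}_{ij}(R(v))-(R(v)_{ij}-v_{ij})$. (It also takes a $K\to\infty$ limit of \cref{lem:bound_uniform}, but that only controls the discrete revenues $\Rev(D^{\mathrm{F}})$ and is not what bounds $\Rev(D)$.) As you found, the paper's own formula \cref{eq:Phi-lifted-explicit}, with $\zeta_{ij}(t_i')=t'_{ij}-\Delta/2$, instead gives
\[
\Phi^\lambda_{ij}(v)=\Phi^{\lambda^{\mathrm{C}}}_{ij}(w)-(w_{ij}-v_{ij})+\alpha_i(w)\bigl(\tfrac{\Delta}{2}-(w_{ij}-v_{ij})\bigr),
\]
where $\alpha_i(w)$ is the normalized inflow of \cref{lem:one-sided}. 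So your ``$c$'' is not a constant. It is a mean-zero, $v$-dependent term that stretches the within-cell fluctuation by the factor $1+\alpha_i(w)$. Because the within-cell offset is symmetric about $\Delta/2$ and $(\max_i\,\cdot\,)_+$ is convex, the spreading kernel certifies a bound that is never smaller, and in general larger, than the stated right-hand side. Your translation (cousin) kernel makes every jump equal to the coarse jump, so the identity holds exactly. Flow conservation holds because translation maps $C_i(t_i'')$ onto $C_i(t_i')$ preserving Lebesgue measure, while $s_is_{-i}f^{\mathrm{C}}(w)=f(v)$. When you invoke \cref{lem:cont-thm24} for this Dirac-valued kernel, say explicitly what the proof actually needs. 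It needs the aggregated inflow measure $\int\lambda_i^{v_{-i}}(u_i,\cdot)\,\mathrm{d}u_i$ and the aggregated jump measure to be absolutely continuous with the stated densities, which translation invariance guarantees. Alternatively, perform the cell-by-cell substitution $y=u_i-R(u_i)+t_i'$ directly in the DSIC-weighted sum. The trade-off between the two kernels is as follows. The spreading kernel conserves flow for any density and is what powers \cref{thm:coarse-plus-error}, at the price of the centroid error. Your kernel requires the within-cell conditional density to be identical across cells, which is exactly what uniformity supplies, and in return it delivers the sharp bound stated in the theorem.
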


The proof can be found in \cref{appx:proof}. Two features of \cref{Thm:ct} are worth highlighting. First, the bound holds for  any coarse grid $\mathcal{G}^\mathrm{C}$ and any flow-conserving $\lambda^\mathrm{C}$, including $\lambda^\mathrm{C}$ that are suboptimal for the discrete problem on $\mathcal{G}^\mathrm{C}$. This is a property we exploit in practice, since our solver produces flow-conserving duals by deep learning. Second, although the statement involves only coarse-grid virtual values $\Phi^{\lambda^\mathrm{C}}$ and a within-cell offset $w_{ij}-v_{ij}$, lifting is essential to the proof: it constructs (via \cref{eq:lift-kernel,eq:lift-kernel-sink}) a continuous flow-conserving kernel $\lambda$ whose induced virtual values satisfy the pointwise identity
\[
\Phi^\lambda_{ij}(v)\;=\;\Phi^{\lambda^\mathrm{C}}_{ij}(R(v))\;-\;\bigl(R(v)_{ij}-v_{ij}\bigr),
\]
and continuous weak duality (\cref{lem:cont-thm24}) then turns this identity into the stated revenue bound.

\subsection{Upper Bound on Continuous Revenue under a General Distribution}

In \cref{sec:method:cub}, we discuss how to compute a valid upper bound on optimal revenue for a uniform value distribution. 
Unlike the uniform case, where any coarse-grid flow-conserving dual solution already yields a valid bound on the continuous revenue (\cref{Thm:ct}), for general distributions 
we show that the discretized bound plus a computable correction term provides a valid continuous certificate, with the correction vanishing as the grid is refined.

Denote the coarse virtual values by $\Phi^{ \lambda^{\mathcal{G}}}$. For computational purposes, one can write the virtual value under the original, continuous distribution $\Phi^\lambda$ in terms of the grid
data and \emph{cell conditional means}. Define, for each bidder $i$, item $j$,
and each grid point $t_i\in\mathcal G_i$,
\[
\zeta_{ij}(t_i)
:=
\E[v_{ij}\mid R_i(v_i)=t_i]
=
\int_{C_i(t_i)} u_{ij}\,s_i(u_i)\,\mathrm{d}u_i.
\]
Then a direct substitution of \cref{eq:lift-kernel} into \cref{eq:virt-cont}
gives, for almost every $v$ with $t=R(v)$,
\begin{equation}\label{eq:Phi-lifted-explicit}
\Phi_{ij}^{\lambda}(v)
=
v_{ij}
-
\frac{1}{f^\mathcal{G}(t)}
\sum_{t_i'\in\mathcal{G}_i}
\lambda^{\mathcal{G},t_{-i}}_i(t_i',t_i)\,\big(\zeta_{ij}(t_i')-v_{ij}\big).
\end{equation}
To see this, the factors $s_i,s_{-i}$ cancel exactly against $f(v)$. We then compare the lifted
$\Phi^\lambda(v)$ from \cref{eq:Phi-lifted-explicit} to
$\Phi^{ \lambda^{\mathcal{G}}}(t)$ for almost every $v$ and $t= R(v)$.

\begin{lemma}\label{lem:one-sided}
For almost every $v\in V$ and $t= R(v)$, for every bidder $i$ and item $j$,
\begin{equation}\label{eq:one-sided}
\Phi_{ij}^{\lambda}(v)
\;\le\;
\Phi^{ \lambda^{\mathcal{G}}}_{ij}(t)
\;+\;
\Delta \alpha_i(t),
\end{equation}
where $\Delta$ is the mesh size and $\alpha_i(t)=\frac{1}{f^\mathcal{G}(t)}
\sum_{t_i'}\lambda^{\mathcal{G},t_{-i}}_i(t_i',t_i)>0$ is the normalized incoming-flow ratio.
\end{lemma}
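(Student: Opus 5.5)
We start from the explicit formula \cref{eq:Phi-lifted-explicit} for the lifted virtual value and the definition \cref{eq:virtual_value_def} of the coarse virtual value at $t=R(v)$,
\[
\Phi^{\lambda^{\mathcal{G}}}_{ij}(t)=t_{ij}-\frac{1}{f^\mathcal{G}(t)}\sum_{t_i'}\lambda^{\mathcal{G},t_{-i}}_i(t_i',t_i)\,(t'_{ij}-t_{ij}),
\]
and subtract the two expressions term by term. Writing $\beta(t_i'):=\lambda^{\mathcal{G},t_{-i}}_i(t_i',t_i)/f^\mathcal{G}(t)\ge 0$, so that $\sum_{t_i'}\beta(t_i')=\alpha_i(t)$, the difference becomes
\[
\Phi^{\lambda}_{ij}(v)-\Phi^{\lambda^{\mathcal{G}}}_{ij}(t)
=(v_{ij}-t_{ij})\bigl(1+\alpha_i(t)\bigr)+\sum_{t_i'}\beta(t_i')\bigl(t'_{ij}-\zeta_{ij}(t_i')\bigr).
\]
This is pure algebra: the $v_{ij}$ terms collect into a coefficient $1+\alpha_i(t)$, and similarly for the $t_{ij}$ terms.

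Next we bound each piece using only the geometry of the cells. Since $v\in C(t)$ and $R$ rounds up, we have $v_{ij}\le t_{ij}$, so the first term is $\le 0$ because $1+\alpha_i(t)>0$. For the second term, $\zeta_{ij}(t_i')$ is a conditional mean of $u_{ij}$ over the cell $C_i(t_i')$, where the weight $s_i$ integrates to one. That cell lies in $(t'_{ij}-\Delta,t'_{ij}]$ in coordinate $j$, so $t'_{ij}-\zeta_{ij}(t_i')\in[0,\Delta)$. Hence the second term is at most $\Delta\sum_{t_i'}\beta(t_i')=\Delta\,\alpha_i(t)$, which gives \cref{eq:one-sided}.

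The main care point is making sure the cancellations behind \cref{eq:Phi-lifted-explicit} hold for almost every $v$. This requires $f(v)>0$, the conditional weights to be well defined (cells with $f^\mathcal{G}_i(t_i')=0$ carry no flow and can be dropped), and the cell $C_i(t_i')$ for the lowest grid point to include its boundary at $0$, which still lies within distance $\Delta$ below $t'_{ij}$. A secondary subtlety is that the stated strict positivity $\alpha_i(t)>0$ is not needed for the inequality; it only requires $\alpha_i(t)\ge 0$, which holds since the flows are nonnegative. So the main step is the sign bookkeeping in the difference identity.
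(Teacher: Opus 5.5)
Your proposal is correct and follows essentially the same route as the paper. Both proofs start from the explicit lifted formula, split $\zeta_{ij}(t_i')-v_{ij}$ through $t'_{ij}$ and $t_{ij}$, and bound each piece by sign; your single term $(v_{ij}-t_{ij})(1+\alpha_i(t))\le 0$ just merges the paper's first and third lines. Your explicit argument that $0\le t'_{ij}-\zeta_{ij}(t_i')\le\Delta$, and your remark that only $\alpha_i(t)\ge 0$ is needed, supply details the paper leaves implicit.
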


\begin{proof}
Fix $i,j$ and $v$ with $t= R(v)$.
Start from the explicit lifted formula \cref{eq:Phi-lifted-explicit}, and
expand the term in parentheses:
\[
\zeta_{ij}(t_i')-v_{ij} = (\zeta_{ij}(t_i')-t_{ij}') + (t_{ij}'-t_{ij}) + (t_{ij}-v_{ij}).
\]
Substitute and regroup:
\begin{align*}
\Phi_{ij}^{\lambda}(v)
&=
v_{ij}
-
\frac{1}{f^\mathcal{G}(t)}
\sum_{t_i'}
\lambda^{\mathcal{G},t_{-i}}_i(t_i',t_i) (t_{ij}'-t_{ij})
\\
&\quad
-\frac{1}{f^\mathcal{G}(t)}
\sum_{t_i'}
\lambda^{\mathcal{G},t_{-i}}_i(t_i',t_i) (\zeta_{ij}(t_i')-t_{ij}')
\\
&\quad
-\frac{1}{f^\mathcal{G}(t)}
\sum_{t_i'}
\lambda^{\mathcal{G},t_{-i}}_i(t_i',t_i) (t_{ij}-v_{ij}).
\end{align*}

For the first line, since
\(
\Phi^{ \lambda^{\mathcal{G}}}_{ij}(t)
=
t_{ij}
-
\frac{1}{f^\mathcal{G}(t)}
\sum_{t_i'}\lambda^{\mathcal{G},t_{-i}}_i(t_i',t_i) (t_{ij}'-t_{ij}),
\)
replacing $t_{ij}$ by $v_{ij}$ changes it by $v_{ij}-t_{ij}\le 0$ (since $t_{ij}$
is the grid point dominating the cell containing $v_{ij}$).
Formally, write
\[
v_{ij}
-
\frac{1}{f^\mathcal{G}(t)}\sum_{t_i'}\lambda^{\mathcal{G},t_{-i}}_i(t_i',t_i)(t_{ij}'-t_{ij})
=
\Phi^{ \lambda^{\mathcal{G}}}_{ij}(t)\;+\;(v_{ij}-t_{ij})
\;\le\;
\Phi^{ \lambda^{\mathcal{G}}}_{ij}(t).
\]

The second line 
\[
-\frac{1}{f^\mathcal{G}(t)}
\sum_{t_i'}\lambda^{\mathcal{G},t_{-i}}_i(t_i',t_i) (\zeta_{ij}(t_i')-t_{ij}')
=
\frac{1}{f^\mathcal{G}(t)}
\sum_{t_i'}\lambda^{\mathcal{G},t_{-i}}_i(t_i',t_i) (t_{ij}'-\zeta_{ij}(t_i'))
\;\le\;
\Delta \alpha_i(t).
\]

The third line is \emph{nonpositive}, since $t_{ij}-v_{ij}\ge 0$ and
$\lambda^{\mathcal{G}}\ge 0$. Combining these three bullets yields exactly \cref{eq:one-sided}.
\end{proof}

\begin{theorem}[Upper Bound on Continuous Revenue under General Distribution]\label{thm:coarse-plus-error}
Let $\lambda^{\mathcal{G}}$ be any coarse flow-conserving dual solution for grid
$\mathcal{G}$, and let $\mathrm{UB}_{\mathrm{coarse}}$ be its (known) coarse dual
objective value.
Then the continuous optimal revenue satisfies
\begin{equation}\label{eq:final-bound}
\Rev(D)
\;\le\;
\mathrm{UB}_{\mathrm{coarse}}
\;+\;
m \Delta\cdot
\E_{t\sim\mathcal{G}}\Big[\max_{i\in[n]}\alpha_i(t)\Big],
\end{equation}
where $\Delta$ is the width of a grid cell in $\mathcal{G}$, $\alpha_i(t)$ is the
normalized incoming-flow ratio (\cref{lem:one-sided}), and the correction term $m \Delta\cdot
\E_{t\sim\mathcal{G}}\Big[\max_{i\in[n]}\alpha_i(t)\Big]$ converges to zero as $\Delta\rightarrow 0$.
\end{theorem}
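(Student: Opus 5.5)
The plan is to combine the lifting construction of \cref{eq:lift-kernel}--\cref{eq:lift-kernel-sink} with continuous weak duality, and then push the pointwise bound of \cref{lem:one-sided} through the virtual-welfare objective. Starting from the coarse flow-conserving $\lambda^{\mathcal{G}}$, I will build the lifted kernel $\lambda$. By \cref{lem:lift-flow} it conserves flow in the continuous sense \cref{eq:flow-cont}. So \cref{lem:cont-thm24} gives
\[
\Rev(D)\le \int_V f(v)\sum_{j=1}^m \max_{i\in[n]}\big(\Phi^\lambda_{ij}(v)\big)_+\,\mathrm{d}v .
\]
The work that remains is to compare this integral with $\mathrm{UB}_{\mathrm{coarse}}=\E_{t\sim D^{\mathcal{G}}}\big[\sum_j \max_i(\Phi^{\lambda^{\mathcal{G}}}_{ij}(t))_+\big]$.

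Next I will apply \cref{lem:one-sided} pointwise. For a.e.\ $v$ with $t=R(v)$ it gives $\Phi^\lambda_{ij}(v)\le \Phi^{\lambda^{\mathcal{G}}}_{ij}(t)+\Delta\alpha_i(t)$. Monotonicity of $\max_i$ and of $(\cdot)_+$, together with subadditivity $(a+b)_+\le a_++b$ for $b\ge 0$, then gives
\[
\max_i\big(\Phi^\lambda_{ij}(v)\big)_+\le \max_i\big(\Phi^{\lambda^{\mathcal{G}}}_{ij}(t)\big)_+ +\Delta\max_i\alpha_i(t).
\]
Summing over the $m$ items gives the factor $m$. Because the right-hand side depends on $v$ only through $R(v)$, I will split $\int_V$ into cells $C(t)$ and use $\int_{C(t)}f=f^{\mathcal{G}}(t)$. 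This turns the integral into $\E_{t\sim D^{\mathcal{G}}}$, which yields \cref{eq:final-bound}. Two points need care. First, the role of $v_{-i}$ inside $\alpha_i(t)$ is consistent, since $\alpha_i$ depends on $t_{-i}=R(v_{-i})$. Second, the bound in the second line of the proof of \cref{lem:one-sided} uses $0\le t'_{ij}-\zeta_{ij}(t'_i)\le\Delta$, which holds because the conditional mean lies in the cell of width $\Delta$ below $t'_{ij}$.

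The main obstacle is the claim that the correction vanishes as $\Delta\to0$. This is delicate because $\alpha_i(t)$ is the normalized inflow of $\lambda^{\mathcal{G}}$, and it could in principle grow as the grid is refined. My first attempt will be to bound the expectation using flow conservation. Summing $f^{\mathcal{G}}(t)\alpha_i(t)$ over $t$ gives the total inter-node flow, and by conservation this equals the total flow weighted by path length. So I need a uniform bound on the total flow per unit source mass of the duals being used. For near-optimal duals, a useful heuristic is that inflow $\times$ jump size equals the information-rent term in $\Phi$. That term is of order $v_{\max}$, while jump sizes are at least $\Delta$ along local edges, so a naive estimate makes $\Delta\alpha_i=O(1)$ rather than $o(1)$.

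I therefore expect the honest route to be one of two options. The first is to interpret the statement as holding for a family of duals whose $\E[\max_i\alpha_i]$ stays bounded, for example duals obtained from a fixed coarse solution by the uniform-style lifting of \cref{sec:method:lift}, which preserves $\alpha$ exactly. The second is to restrict to $\lambda^{\mathcal{G}}$ with uniformly bounded normalized inflow. Under either assumption the correction term is $O(m\Delta)\to0$, and I will state the vanishing claim under that condition.
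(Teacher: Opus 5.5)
Your derivation of \cref{eq:final-bound} is the paper's proof step for step. You lift $\lambda^{\mathcal{G}}$ by \cref{eq:lift-kernel}--\cref{eq:lift-kernel-sink}, invoke \cref{lem:lift-flow} and \cref{lem:cont-thm24}, and apply \cref{lem:one-sided} pointwise. You then pull out $\Delta\max_i\alpha_i(t)$ using $\alpha_i\ge 0$, sum over items, and collapse the integral cell by cell. Your diagnosis of the vanishing claim is also correct: for arbitrary flow-conserving duals it cannot hold. Adding flow $c$ on both edges $a\to b$ and $b\to a$ preserves conservation and raises $\alpha_i$ at both endpoints without bound. Nearest-neighbour, Myerson-type flows give $\Delta\alpha_i(t)\approx(1-F(t))/f(t)=\Theta(1)$. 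The paper's proof does in fact add a hypothesis that the statement does not show.

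The gap is that you leave that hypothesis abstract, whereas the paper discharges it for the duals its method produces. Your ``first attempt'' was one step away from this. Under the parameterization of \cref{sec:method:flow}, every type leaks to the sink with probability $\alpha_i^{v_{-i}}(u)\ge\varepsilon$. So the expected number of type-to-type hops per unit of source mass is at most $(1-\varepsilon)/\varepsilon$, which is exactly the ``total flow per unit source mass'' bound you said you needed.

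Concretely, summing $\mu=\nu+(\Pi_i^{v_{-i}})^\top\mu$ over $u$ gives $\sum_u\mu(u)\le f^{\mathcal{G}}_{-i}(v_{-i})+(1-\varepsilon)\sum_u\mu(u)$. Hence the total type-to-type flow at context $v_{-i}$ is $\sum_u\mu(u)(1-\alpha_i^{v_{-i}}(u))\le\frac{1-\varepsilon}{\varepsilon}f^{\mathcal{G}}_{-i}(v_{-i})$. The factor $f^{\mathcal{G}}(t)$ cancels in $\E_t[\alpha_i(t)]=\sum_t\sum_{t_i'}\lambda_i^{\mathcal{G},t_{-i}}(t_i',t_i)$, so $\E_t[\alpha_i(t)]\le(1-\varepsilon)/\varepsilon$. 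This gives $\E_t[\max_i\alpha_i(t)]\le n(1-\varepsilon)/\varepsilon$, and the correction is at most $mn\Delta(1-\varepsilon)/\varepsilon\to 0$ for fixed $\varepsilon$.

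Two adjustments to your options follow. First, only this grid-independent bound in expectation is needed; the leakage floor does not bound $\alpha_i$ pointwise, so your ``uniformly bounded normalized inflow'' is stronger than necessary. Second, your first option is unavailable here. The cousin lifting of \cref{sec:method:lift} conserves flow only when all children of a coarse point carry equal mass, i.e.\ for uniform $D$.

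Your $\Theta(1)$ heuristic remains a fair caveat about what the paper proves. The bound grows like $1/\varepsilon$, and the training protocol anneals $\varepsilon$ to $5\times10^{-6}$. Also, a fixed-$\varepsilon$ chain cannot carry nearest-neighbour flow over $\Theta(1/\Delta)$ hops. So the vanishing holds for the restricted fixed-$\varepsilon$ family, not for near-optimal duals in general.
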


\begin{proof}
Since the lifted kernel $\lambda$ is a feasible dual solution, the optimal revenue on the continuous problem must be smaller than the lifted dual objective:
\[
\Rev(D)\le \int_V f(v)\Big(\sum_{j=1}^m \max_i(\Phi_{ij}^{\lambda}(v))_+\Big) \dV.
\]
Fix $v$ and let $t= R(v)$. Apply Lemma~\ref{lem:one-sided}:
\[
\Phi_{ij}^{\lambda}(v)\le \Phi^{ \lambda^{\mathcal{G}}}_{ij}(t)+\Delta\alpha_i(t).
\]
Since $\Delta\alpha_i(t)\ge 0$, we have
\[
\max_i(\Phi_{ij}^{\lambda}(v))_+
\le
\max_i\big(\Phi^{ \lambda^{\mathcal{G}}}_{ij}(t)+\Delta\alpha_i(t)\big)_+
\le
\max_i\big(\Phi^{ \lambda^{\mathcal{G}}}_{ij}(t)\big)_+
+
\Delta\max_i\alpha_i(t).
\]
Sum over $j$:
\[
\sum_{j=1}^m \max_i(\Phi_{ij}^{\lambda}(v))_+
\le
\sum_{j=1}^m \max_i\big(\Phi^{ \lambda^{\mathcal{G}}}_{ij}(t)\big)_+
+
m\Delta\max_i\alpha_i(t).
\]
Now take expectation over $v\sim D$. The first term of r.h.s. depends on $v$ only through
$t= R(v)$, so its expectation equals the expectation under $t\sim\mathcal{G}$:
\[
\E_{v\sim D}\Big[\sum_{j=1}^m \max_i(\Phi^{ \lambda^{\mathcal{G}}}_{ij}( R(v)))_+\Big]
=
\E_{t\sim\mathcal{G}}\Big[\sum_{j=1}^m \max_i(\Phi^{ \lambda^{\mathcal{G}}}_{ij}(t))_+\Big]
=
\mathrm{UB}_{\mathrm{coarse}}.
\]
The second term likewise becomes
\[
m\Delta\E_{v\sim D}\big[\max_i\alpha_i( R(v))\big]
=
m\Delta\E_{t\sim\mathcal{G}}\big[\max_i\alpha_i(t)\big].
\]
Combining yields \cref{eq:final-bound}. 

\paragraph{Boundedness of the correction term under the absorbing Markov chain
parameterization.} We first show that,
when $\lambda^{\mathcal{G}}$ is produced by the parameterization of
\cref{sec:method:flow} with leakage floor $\varepsilon>0$, the expectation
$\E_{t\sim\mathcal{G}}[\max_i\alpha_i(t)]$ is bounded uniformly.

Fix bidder $i$ and any context $v_{-i}$. Each row of $\Pi_i^{v_{-i}}$ sums to
at most $1-\varepsilon$, and the source vector $\nu^{v_{-i}}$ has entries
$f^\mathcal{G}(u,v_{-i})=f^\mathcal{G}_i(u)\,f^\mathcal{G}_{-i}(v_{-i})$ by
bidder independence, so
$\sum_{u\in\mathcal{G}_i}\nu^{v_{-i}}(u)=f^\mathcal{G}_{-i}(v_{-i})$.

Summing the flow equation $\mu=\nu+(\Pi_i^{v_{-i}})^\top\mu$ entrywise yields
\[
\sum_{u\in\mathcal{G}_i}\mu(u)
\;=\;
f^\mathcal{G}_{-i}(v_{-i})
+
\sum_{u\in\mathcal{G}_i}\mu(u)\bigl(1-\alpha_i^{v_{-i}}(u)\bigr)
\;\le\;
f^\mathcal{G}_{-i}(v_{-i})
+
(1-\varepsilon)\sum_{u\in\mathcal{G}_i}\mu(u),
\]
hence $\sum_{u}\mu(u)\le f^\mathcal{G}_{-i}(v_{-i})/\varepsilon$. The total
type-to-type inflow at context $v_{-i}$ is therefore
\[
\sum_{t_i,t_i'\in\mathcal{G}_i}\lambda_i^{\mathcal{G},v_{-i}}(t_i',t_i)
\;=\;
\sum_{u\in\mathcal{G}_i}\mu(u)\bigl(1-\alpha_i^{v_{-i}}(u)\bigr)
\;\le\;
\frac{1-\varepsilon}{\varepsilon}\,f^\mathcal{G}_{-i}(v_{-i}).
\]
Averaging over $t\sim\mathcal{G}$, the factor $f^\mathcal{G}(t)$ in the
denominator of $\alpha_i(t)$ cancels:
\[
\E_{t\sim\mathcal{G}}\bigl[\alpha_i(t)\bigr]
\;=\;
\sum_{t_{-i}}\sum_{t_i,t_i'}\lambda_i^{\mathcal{G},t_{-i}}(t_i',t_i)
\;\le\;
\frac{1-\varepsilon}{\varepsilon}.
\]
Since $\alpha_i\ge 0$, then $\E[\max_i\alpha_i(t)]\le\sum_i\E[\alpha_i(t)]
\le n(1-\varepsilon)/\varepsilon$, which depends only on $\varepsilon$ and not
on the grid. Consequently, the correction term in \cref{eq:final-bound} is
bounded by $mn\Delta(1-\varepsilon)/\varepsilon$, hence $O(\Delta)$ at fixed
$\varepsilon$ and vanishing as $\Delta\to 0$.

\end{proof}

\cref{thm:coarse-plus-error} allows our deep learning based, dual computation framework to be applied to general distributions, while still yielding a valid bound on the optimal continuous revenue.
The theorem confirms that as the grid is refined and $\Delta \to 0$, the correction term vanishes, and the discretized dual objective converges to a valid upper bound on the continuous optimal revenue.

In principle, \cref{thm:coarse-plus-error} can also be applied to compute a valid upper bound for any value distribution (including Beta, for example) at any finite grid resolution, by adding the explicit correction term $m\Delta \cdot \E[\max_i \alpha_i(t)]$. In practice, this correction is loose for coarse grids, motivating the use of finer discretizations.

\section{Experiments}
\label{sec:experiments}

We evaluate our approach on a hierarchy of different auction problems, moving from analytically solvable discrete type distribution problems to high-dimensional continuous type distribution problems. 
For discrete type domains, we validate our neural, dual problem solver by comparing its computed dual objective against the true optimal revenue obtained via linear programming (LP) to solve the primal auction design problem. 

For continuous type domains in which the true optimum is unknown, we assess the tightness of our results by comparing our \textbf{Certified Upper Bound} (\cref{Thm:ct}) against the best achievable revenue from \textbf{Primal Lower Bounds} (e.g., GemNet). A small difference between these bounds 
certifies that the auction computed from the primal method is near-optimal. To compute these bounds, we first learn a feasible dual solution on a discretized grid, 
 and then compute the {\bf Certified Lifted Bound} by applying the lifting map of \cref{Thm:ct} together with exact integration over the within-cell offsets $(w_{ij}-v_{ij})$. For uniform valuations this lifted value is exactly the {\bf Certified Upper Bound} on the continuous revenue guaranteed by \cref{Thm:ct}; we use the two terms interchangeably, with ``lifted'' used to emphasize the computational procedure.
For general distributions, where a lifting map is not yet established, we report the {\bf Exact Discrete Bound}, which is shown in \cref{thm:coarse-plus-error} to converge to the continuous bound as the discretization resolution $K$ approaches infinity. At any finite grid resolution, the {\bf Exact Discrete Bound} is a valid upper bound for the discretized problem; by \cref{thm:coarse-plus-error}, this bound converges to a valid upper bound on the continuous revenue as the grid is refined.
We provide the detailed architectures and hyperparameters of the deep learning
methods in \cref{app:implementation}.

\subsection{Validation: Exact Recovery in the Binary Valuation Setting}
We first verify the correctness of our flow-based parameterization using the auction studied by  \citet{yao2017dominant}, a canonical instance with two items, $n$ bidders, and binary valuations $\{a, b\}$, with $0<a<b$. As the domain is discrete, this setting evaluates the expressiveness of the neural network and effectiveness of optimization without concern to discretization error. We define the problem instance with parameters $a=3$ and $p=0.3$ (the probability of the low type), varying $b$. We emphasize this instance because, as noted in the related work, it represents the only known analytical characterization for a multi-bidder, multi-item DSIC auction. Recovering these values certifies that our solver can capture the complex interaction between bidders in the optimal dual solution.

\Cref{tab:yao} compares our computed bounds against the known analytical optimal revenue. In all configurations, 
for different values of $b$ and $n$, our method recovers the theoretical optimum with high precision. This confirms that the absorbing Markov chain parameterization is sufficiently expressive to represent the binding constraints of the optimal mechanism. Crucially, because our architecture enforces constraints by construction, the optimality gap is driven solely by the convergence of the optimizer rather than feasibility violations.

\begin{table}[h]
    \caption{{\bf Certified upper bounds} for the multi-bidder, 2-item auctions in \citet{yao2017dominant} ($n\times $2, discrete types, $a=3, p=0.3$). Our framework recovers the analytical optimal revenue, demonstrating the exactness of the neural network dual solver.}
    \label{tab:yao}
    \centering
    \small
    \begin{tabular}{l c c c c | c c c}
        \toprule
        \multirow{2}{*}{Method} & \multicolumn{4}{c}{\textbf{Varying High Valuation} ($n=2$)} & \multicolumn{3}{c}{\textbf{Varying Bidders} ($b=4$)} \\
        \cmidrule(lr){2-5} \cmidrule(lr){6-8}
         & $b=7$ & $b=5$ & $b=4$ & $b=3.5$ & $n=3$ & $n=5$ & $n=8$ \\
        \midrule
        Optimal (Analytical) & 12.7400 & 9.1504 & 7.4774 & 6.7221 & 7.8309 & 7.9840 & 7.9996 \\
        \textbf{Ours (Certified Bound)} & \textbf{12.7400} & \textbf{9.1504} & \textbf{7.4774} & \textbf{6.7221} & \textbf{7.8309} & \textbf{7.9840} & \textbf{7.9996} \\
        \midrule
        GemNet & 12.7400 & 9.1504 & 7.4774 & 6.7221 & 7.8309 & 7.9830 & 7.9996 \\
        RegretNet  & 12.8052 & 9.1738 & 7.5017 & 6.6799 & 7.8370 & 7.9980 & 8.0000 \\
        (IC violation)  & 0.00951 & 0.00846 & 0.00548 & 0.0094 & 0.0383 & 0.0396 & 0.0363 \\
        \bottomrule
    \end{tabular}
\end{table}

\subsection{Qualitative Validation: Single-Bidder Auction}

We next examine a single-bidder, multi-item setting for which there is an optimal, analytical solution, both to validate the tightness of the bound that we compute and to understand whether the dual solver captures the structural properties of the optimal mechanism. For this, we consider a setting with a single bidder with two items and valuations drawn from $U[0,1]$. \citet{manelli2006bundling} characterize the optimal mechanism, which is known to exhibit complex bundling behavior defined by diagonal decision boundaries and has expected revenue of \textbf{0.549}.

\begin{wrapfigure}[17]{r}{0.4\textwidth}
  \centering
  \vspace{-40pt} 
  \includegraphics[width=\linewidth]{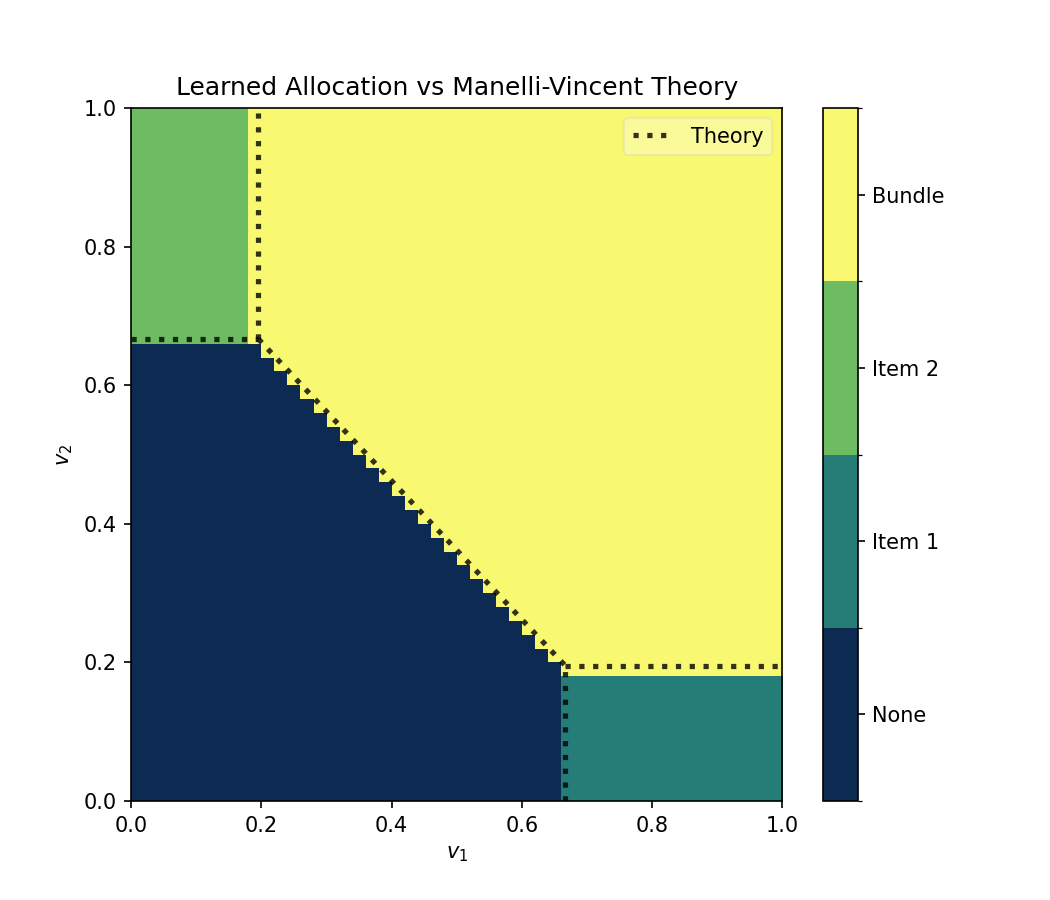}
  \caption{\textbf{Allocations induced by learned duals vs. theory (1$\times$2, Uniform).} The heatmap shows the allocation rule (\cref{eq:maxx}) specified by learned virtual values $\Phi_1, \Phi_2$. The dotted lines represent the analytical results derived by \citet{manelli2006bundling}. }
  \label{fig:alloc_viz}
\end{wrapfigure}

We trained the dual solver by discretizing the continuous valuation domain $[0,1]^2$ into a uniform $50 \times 50$ grid. We visualize the learned decision regions in \Cref{fig:alloc_viz}. Recalling \cref{eq:maxx}, the revenue-maximizing allocation for a fixed dual $\lambda$ assigns item $j$ if and only if the virtual value is positive ($\Phi_{j}^{\lambda}(v) > 0$). As shown, the network and the dual solution automatically discovers the diagonal decision boundaries characteristic of bundling mechanisms, recovering the correct structural form without any prior knowledge of the allocation rule.

On this $50 \times 50$ discretization, we compute an exact discrete bound of \textbf{0.569}. The lifting theory in \cref{Thm:ct} complements this learned bound by tightening the certificate beyond the discrete approximation. By integrating over the within-cell offsets $(w_{ij} - v_{ij})$ as specified by the lifting map, the lifting map tightens the certificate to \textbf{0.558}, approximately $1.6\%$ above the theoretical optimum of \textbf{0.549}. To further improve the bound, we increased the grid resolution to $200 \times 200$, with the effect of improving the {\bf Certified Lifted Bound} to \textbf{0.552}, within $0.6\%$ of the theoretical optimum. We display the allocation learned on the coarser $50 \times 50$ grid in \cref{fig:alloc_viz}, which demonstrates the structure of the learned solution, showing
that it resembles the optimal  allocation rule.

These results confirm that our flow-based parameterization is flexible enough to recover complex, non-trivial allocation structures and almost
exact revenue upper bounds.

\subsection{Grid Convergence and Consistency}

We next analyze the convergence behavior of the dual solver in the $2 \times 2$ uniform distribution setting by discretizing the domain into progressively finer grids of size $K \times K$. For small grids, the problem can be solved exactly using a Linear Program (LP), providing a ground truth for dual computational framework.

\Cref{tab:grid_ablation} presents the results. On coarse grids for which the LP is tractable ($K \in \{2, \dots, 10\}$), our {\bf Discrete Dual}  matches the exact LP optimal revenue to within $0.1\%$ and serves as a valid upper bound to the discrete type problem. This validates that our gradient-based optimization successfully navigates the high-dimensional polytope of flow-conserving constraints to find the global maximum.

As the grid resolution $K$ increases, our {\bf Certified Lifted Bound}, which by \cref{Thm:ct} accounts for the continuous geometry within cells to bound the continuous problem, decreases monotonically, providing a progressively stricter upper limit on the optimal revenue (\Cref{tab:grid_ablation}). The gap between the discrete and lifted bounds reduces for finer grids, confirming the consistency of the lifting procedure as we approach a better bound for the continuous problem. The LP solution is intractable for instances with type space larger than $10$ on each dimension,
highlighting the scalability advantage of the neural-network based approach.
\begin{table}[h]
    \caption{Grid refinement analysis (2$\times$2, Uniform). We compare the dual solution obtained by our neural network on the discrete value grid against the exact linear program (LP) solution for the same setting. The learned, discrete bound closely matches the LP optimum, while the lifted bound converges monotonically as the grid refines. Solving the LP becomes intractable for moderate grid sizes: 12$\times$12 and larger.
    \label{tab:grid_ablation}}
    \centering
    \begin{tabular}{r c c c c}
        \toprule
        Grid& \textbf{LP Exact} & \textbf{Discrete Dual (Ours)} & \textbf{Lifted Cont. Bound (Ours)} & Gap (LP vs. Dual) \\
        \midrule
        $2 \times 2$   & 1.5625 & 1.5625 & 1.1861 & $< 0.01\%$ \\
        $4 \times 4$   & 1.2275 & 1.2276 & 1.0422 & $< 0.01\%$ \\
        $8 \times 8$   & 1.0574 & 1.0582 & 0.9597 & $0.07\%$ \\
        $10 \times 10$ & 1.0212 & 1.0234 & 0.9437 & $0.21\%$ \\
        $12 \times 12$ & -- & 1.0000 & 0.9342 & -- \\
        
        $25 \times 25$ & -- & 0.9433 & 0.9117 & -- \\
        $50 \times 50$ & -- & 0.9082 & 0.8919 & -- \\
        \bottomrule
    \end{tabular}
\end{table}

While \cref{lem:bound_uniform} formally relates the optimal discrete revenues of a grid and its refinement, the bounds in \cref{tab:grid_ablation} are produced by our neural network based solver and need not be exactly optimal at each resolution. Nonetheless, the {\bf Certified Lifted Bound} still decreases monotonically as the grid is refined, empirically confirming the consistency of the lifting procedure and indicating that the optimizer tracks the optimal dual closely at every resolution.

\subsection{Multi-Bidder Continuous Value Auctions}

We now present our  results on multi-item, multi-bidder auctions with continuous value distributions, where
GemNet gives candidate optimal mechanisms (and thus a certified lower bound). This is a regime where analytical solutions are unknown, and standard discrete LP formulations are not well-defined because the continuous type space admits no finite variable representation.
We benchmark  the duality framework against  GemNet \citep{wang2024gemnet}, the state of the art deep learning approach that is fully strategyproof (but gives no revenue optimality certificates).
We also benchmark against
RegretNet \citep{dutting2024optimal}, a deep learning approach that optimizes the primal mechanism but with  incentive violations (thus, revenue that may be higher than optimal, but the degree of this possible violation was previously unknown).

\subsubsection{Uniform Valuations: Certified Bounds}
For the Uniform $U[0,1]$ setting, we apply our full pipeline including the lifting map. 
\Cref{tab:multi_results} summarizes the results. For the $2$ bidder, $2$ item setting, we compute a {\bf Certified Upper Bound} of \textbf{0.892} using a $50 \times 50$ grid. 

This bound provides rigorous insights for the two primal methods. It confirms that \textbf{GemNet} (revenue $\approx 0.876$) is performing within \textbf{1.8\%} of the theoretical limit. 
It also reveals that \textbf{RegretNet}, which achieves a revenue of approximately \textbf{0.908},which  exceeds the certified upper bound of $0.892$. This formally establishes that RegretNet's  revenue is not possible in a fully DSIC auction and establishes GemNet as the more reliable benchmark for achievable revenue.

To illustrate these insights visually, we plot the virtual allocation regions induced by the learned dual variables in \Cref{fig:alloc_comparison_methods}. The dual-based approach recovers the sharp, piecewise-linear decision boundaries characteristic of the optimal DSIC mechanism (similar to GemNet), and is distinct from other baselines such as AMenuNet \citep{duan2023scalable}, which limits the design space to affine maximizer allocation rules. In contrast, RegretNet exhibits smoothed boundaries
and comes along with incentive compatibility violations.
\begin{figure*}[t]
    \centering
    \includegraphics[width=\textwidth]{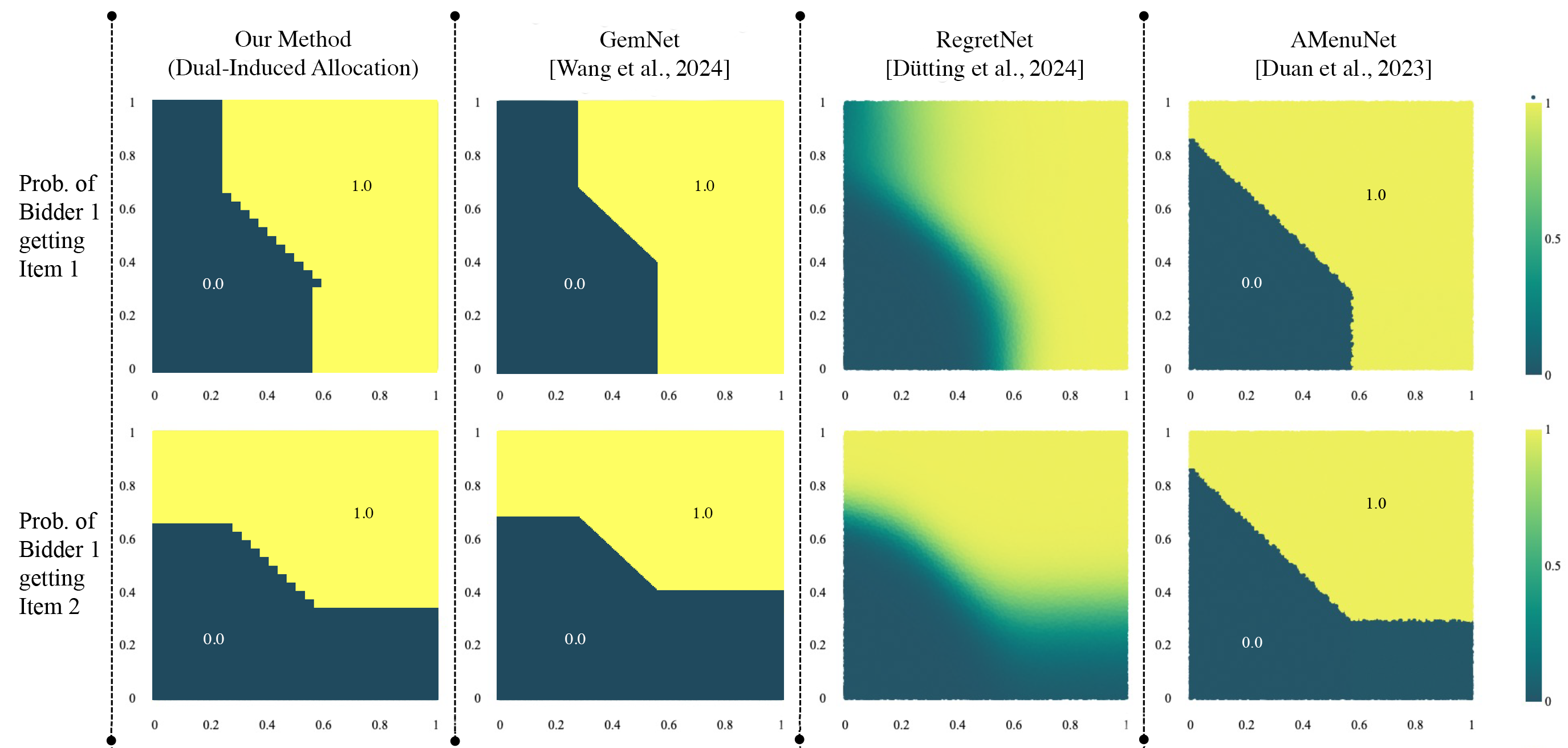}
    \caption{\textbf{Learned dual vs. primal allocations (2$\times$2 Uniform).}  Heatmaps show the probability of Bidder 1 receiving Item 1 (top) and Item 2 (bottom). Bidder 2's values are set at $(0.03, 0.63)$. Our method (left), when using a value grid of 32$\times$32, recovers the sharp decision boundaries characteristic of the known DSIC mechanism (GemNet), whereas RegretNet exhibits smoothed boundaries and has incentive violations and AMenuNet 
    attains a very different allocation rule.
    \label{fig:alloc_comparison_methods}}
\end{figure*}

We also take up  the $3$ bidder, $2$ item problem. We achieve a {\bf Certified Upper Bound} of \textbf{1.120}. Comparing this to the best known primal lower bound of \textbf{1.079} (achieved by GemNet), this certificate bounds the optimality gap of GemNet to within 3.7\%. For both this and the earlier setting, this represents to the best of our knowledge the first rigorous revenue certificate for a DSIC problem of this dimension 
and continuous value distributions.
\begin{table}[t]
\caption{{\bf Certified Upper Bounds} for multi-bidder continuous auctions (2$\times$2, 3$\times$2, Uniform). Our \textbf{Certified Bound} 
reveals that RegretNet overestimates the optimal revenue (due to incentive violations) while  GemNet is within at most 1.8\% of the optimal revenue using the lifted bound from \cref{Thm:ct}.
\label{tab:multi_results}}
\centering
\setlength{\tabcolsep}{5pt} 
\begin{tabular}{l c c c c c}
\toprule
\multirow{2.5}{*}{\textbf{Setting}} & \textbf{GemNet} & \multicolumn{2}{c}{\textbf{RegretNet} (Approx. IC)} & \textbf{Ours} & \textbf{Gap} \\
\cmidrule(lr){3-4}
 & (DSIC) & Revenue & IC Viol. & (Certified Bound) & (vs GemNet) \\
\midrule
2 Bidders, 2 Items & 0.876 & 0.908 & 0.001 & \textbf{0.892} & 1.80\% \\
3 Bidders, 2 Items & 1.079 & 1.111 & 0.002 & \textbf{1.120} & 3.70\% \\
\bottomrule
\end{tabular}
\end{table}

\subsubsection{General Distributions}

To demonstrate the flexibility of our dual computational framework beyond uniform value distributions, we also evaluate an auction design problem for which values are distributed 
Beta(1,2).  Since the lifting map derived in \cref{Thm:ct} 
assumes a uniform value distribution, we report the \textbf{Exact Discrete Bound} in this Beta setting (i.e., the dual objective evaluated exactly over the full discretized type space) and do not use lifting. 
The {\bf Exact Discrete Bound}  is a valid upper bound for the discretized version of
the problem.  As established in \cref{thm:coarse-plus-error}, this discrete bound converges to a valid continuous upper bound as the grid resolution increases.
As shown in \cref{tab:beta_results}, the discrete bound tightly tracks the primal benchmarks. In the $2 \times 2$ setting, the gap between the primal and the discrete dual is approximately 5\%, suggesting that the discretization error is small and the dual approach is effective.
\begin{table}[h]
\caption{Exact Discrete Bound on the discretized grid (2$\times$2, 3$\times$2, Beta(1,2) valuations).
\label{tab:beta_results}}
\centering
\setlength{\tabcolsep}{5pt}
\begin{tabular}{l c c c c c}
\toprule
\multirow{2.5}{*}{\textbf{Setting}} & \textbf{GemNet} & \multicolumn{2}{c}{\textbf{RegretNet} (Approx. IC)} & \textbf{Ours} & \textbf{Gap} \\
\cmidrule(lr){3-4}
 & (DSIC) & Revenue & IC Viol. & (Discrete Bound) & (vs GemNet) \\
\midrule
2 Bidders, 2 Items & 0.558 & 0.578 & 0.004 & \textbf{0.588} & 5.10\% \\
3 Bidders, 2 Items & 0.733 & 0.734 & 0.003 & \textbf{0.760} & 3.55\% \\
\bottomrule
\end{tabular}
\end{table}

\section{Closing Remarks}

In this work we present a deep-learning based framework for computing certified  upper bounds 
on the  revenue in optimal, multi-item, multi-bidder DSIC auctions. By parameterizing the dual problem via absorbing Markov chains, we enforce a flow conservation property---necessary for a useful dual solution---structurally, transforming a complex constrained optimization problem into a simpler, unconstrained optimization problem. Our computational framework and theoretical results bridge the gap between discrete computational methods and optimal auction design for continuous value distributions, providing the first rigorous computational framework to obtain certificates of near-optimality in the uniform distribution setting for state-of-the-art mechanisms identified through methods such as GemNet, while showing that approximately incentive-compatible baselines such as RegretNet overestimate achievable revenue.

While our approach scales substantially beyond linear programming, which becomes intractable for the discretized problems studied here, it remains subject to the curse of dimensionality inherent to grid-based discretizations. A key advantage of the neural network parameterization, however, is that optimization is performed over a continuous, low-dimensional parameter space using stochastic gradient descent on sampled profiles, even though the dual objective is ultimately evaluated over the enumerated discrete type space. Extending the framework to avoid explicit grid enumeration entirely, for example through sparse sampling over the type space, is an important direction for future work.

\newpage
\bibliographystyle{ACM-Reference-Format}
\bibliography{sample-bibliography}

\newpage
\appendix
\crefalias{section}{appendix}
\crefname{appendix}{Appendix}{Appendices}
\Crefname{appendix}{Appendix}{Appendices}

\newpage

\section{Proofs}\label{appx:proof}
\flowconservation*
\begin{proof}
     If flow conservation fails for some $i$ and $v_{-i}$, then there exists a $v_i\in \mathrm{supp}(D^\mathcal{G}_i)$ for which $L(\lambda,x,p)$ becomes unbounded as a function of $p_i(v)$. 
\end{proof}

\vw*
\begin{proof}
The proof proceeds by using flow conservation to simplify $L(\lambda,x,p)$ into a virtual-welfare expression and observing that the Lagrange-multiplier terms are nonnegative over $P_+(D^\mathcal{G})$, becoming zero exactly when the corresponding constraints bind; strong duality then yields the stated characterization for $(x^*,p^*)$ and $\lambda^*$. 
\end{proof}

\lcf*

\begin{proof}
Fix bidder $i$. Fix almost every $v_{-i}$ and set $t_{-i}=R(v_{-i})$.
Fix $v_i$ and set $t_i=R(v_i)$.

\paragraph{Step 1: compute total outflow from $v_i$.}
By \cref{eq:lift-kernel} and \cref{eq:lift-kernel-sink},
\begin{align*}
\lambda_i^{v_{-i}}(v_i,V_i\cup\{\varnothing\})
&=
\lambda_i^{v_{-i}}(v_i,V_i)+\lambda_i^{v_{-i}}(v_i,\{\varnothing\})\\
&=
s_{-i}(v_{-i})\,s_i(v_i)\,
\sum_{t_i'\in\overline{T}_i}
\lambda^{\mathcal{G},t_{-i}}_i(t_i,t_i')\,
\underbrace{\int_{C_i(t_i')} s_i(u_i)\,\mathrm{d}u_i}_{=\,1}
\;+\;
s_{-i}(v_{-i})\,s_i(v_i)\,\lambda^{\mathcal{G},t_{-i}}_i(t_i,\varnothing)\\
&=
s_{-i}(v_{-i})\,s_i(v_i)\,
\sum_{t_i'\in\overline{T}_i\cup\{\varnothing\}}
\lambda^{\mathcal{G},t_{-i}}_i(t_i,t_i').
\end{align*}

\paragraph{Step 2: use discrete flow conservation.}
\[
\sum_{t_i'\in\overline{T}_i\cup\{\varnothing\}}
\lambda^{\mathcal{G},t_{-i}}_i(t_i,t_i')
=
f^\mathcal{G}(t_i,t_{-i})
+
\sum_{t_i'\in\overline{T}_i}\lambda^{\mathcal{G},t_{-i}}_i(t_i',t_i).
\]
Hence
\begin{equation}\label{eq:outflow-expanded}
\lambda_i^{v_{-i}}(v_i,V_i\cup\{\varnothing\})
=
s_{-i}(v_{-i})\,s_i(v_i)\,f^\mathcal{G}(t_i,t_{-i})
+
s_{-i}(v_{-i})\,s_i(v_i)\,\sum_{t_i'\in\overline{T}_i}\lambda^{\mathcal{G},t_{-i}}_i(t_i',t_i).
\end{equation}

\paragraph{Step 3: identify the source term.}
Because $f^\mathcal{G}(t_i,t_{-i})=f^\mathcal{G}_i(t_i)f^\mathcal{G}_{-i}(t_{-i})$,
\[
s_{-i}(v_{-i})\,s_i(v_i)\,f^\mathcal{G}(t_i,t_{-i})
=
\frac{f_{-i}(v_{-i})}{f^\mathcal{G}_{-i}(t_{-i})}\cdot\frac{f_i(v_i)}{f^\mathcal{G}_i(t_i)}\cdot
f^\mathcal{G}_i(t_i)f^\mathcal{G}_{-i}(t_{-i})
=
f_i(v_i)f_{-i}(v_{-i})
=
f(v_i,v_{-i}).
\]
Thus the first term on the right-hand side of \cref{eq:outflow-expanded} is
exactly the source inflow $f(v)$ required by \cref{eq:flow-cont}.

\paragraph{Step 4: compute total inflow into $v_i$.}
We claim
\begin{equation}\label{eq:inflow}
\int_{V_i}\lambda_i^{v_{-i}}(u_i,\mathrm{d}v_i)
=
s_{-i}(v_{-i})\,s_i(v_i)\,\sum_{t_i'\in\overline{T}_i}\lambda^{\mathcal{G},t_{-i}}_i(t_i',t_i).
\end{equation}
To see this, note that by \cref{eq:lift-kernel}, the measure
$\lambda_i^{v_{-i}}(u_i,\cdot)$ restricted to the cell $C_i(t_i)$ has density
\[
\text{(w.r.t.\ Lebesgue on $C_i(t_i)$)}\qquad
s_{-i}(v_{-i})\,s_i(u_i)\,\lambda^{\mathcal{G},t_{-i}}_i(R(u_i),t_i)\,s_i(v_i),
\]
because $\int_{A\cap C_i(t_i)} s_i(w)\,\mathrm{d}w$ contributes the factor
$s_i(v_i)$ at the point $v_i$.
Integrating over all $u_i$ and summing over the finitely many source-cells for
$u_i$ gives
\begin{align*}
\int_{V_i}\lambda_i^{v_{-i}}(u_i,\mathrm{d}v_i)
&=
\sum_{t_i'\in\overline{T}_i}\int_{C_i(t_i')} s_{-i}(v_{-i})\,s_i(u_i)\,
\lambda^{\mathcal{G},t_{-i}}_i(t_i',t_i)\,s_i(v_i)\,\mathrm{d}u_i\\
&=
s_{-i}(v_{-i})\,s_i(v_i)\,\sum_{t_i'\in\overline{T}_i}\lambda^{\mathcal{G},t_{-i}}_i(t_i',t_i)
\underbrace{\int_{C_i(t_i')} s_i(u_i)\,\mathrm{d}u_i}_{=\,1},
\end{align*}
which is exactly \cref{eq:inflow}.

\paragraph{Step 5: conclude flow conservation.}
Combine \cref{eq:outflow-expanded} with the identifications in Steps 3 and 4:
\[
\lambda_i^{v_{-i}}(v_i,V_i\cup\{\varnothing\})
=
f(v_i,v_{-i})
+
\int_{V_i}\lambda_i^{v_{-i}}(u_i,\mathrm{d}v_i),
\]
which is precisely \cref{eq:flow-cont} for almost every $(v_i,v_{-i})$.
\end{proof}

\cwd*

\begin{proof}

The proof is in \cref{sec:method:continuous}, and here we address the final step.

From \cref{eq:cont-thm24}, for the \emph{fixed} flow-conserving kernel $\lambda$ (hence fixed $\Phi^\lambda$),
we have for every DSIC/IR feasible mechanism $(x,p)$:
\[
\mathrm{Rev}_{(x,p)}(D)
=\int_V f(v)\Big(\sum_{i=1}^n p_i(v)\Big)\,dv
\;\le\;
\int_V f(v)\Big(\sum_{i=1}^n\sum_{j=1}^m x_{ij}(v)\,\Phi^\lambda_{ij}(v)\Big)\,dv.
\]
Now take the supremum over all DSIC/IR feasible mechanisms on the left-hand side. Since the
right-hand side depends on $(x,p)$ only through the allocation rule $x$, we obtain
\[
\mathrm{Rev}(D)
:=\sup_{\substack{(x,p)\ \text{DSIC, IR, feasible}}}\ \mathrm{Rev}_{(x,p)}(D)
\;\le\;
\sup_{\substack{(x,p)\ \text{DSIC, IR, feasible}}}
\int_V f(v)\Big(\sum_{i,j} x_{ij}(v)\,\Phi^\lambda_{ij}(v)\Big)\,dv.
\]

\end{proof}

\ubu*

\begin{proof}
Let the (coarse) grid be denoted by $\mathcal{G}^C$ with mesh size $\Delta$, and let $D^C$ be the
induced discretized distribution on $\mathcal{G}^C$.  Let $\lambda^C$ be any \emph{flow-conserving} dual
solution for the discrete problem on $\mathcal{G}^C$ (hence it induces coarse virtual values
$\Phi^{\lambda^C}_{ij}(w)$ for $w\in\mathcal{G}^C$).  Fix an integer $K\ge 2$ and consider the $K$-refined
grid $\mathcal{G}^F$ (mesh $\Delta/K$) together with its discretized distribution $D^F$.

By \cref{lem:bound_uniform}, the optimal revenue under the refined discretization
is upper bounded by:
\[
\mathrm{Rev}(D^F)
\;\le\;
\mathbb{E}_{w\sim D^C}\,\mathbb{E}_{v\sim S(w)}
\Bigg[\sum_{j=1}^m \max\Bigl\{0,\max_{i\in[n]}
\bigl(\Phi^{\lambda^C}_{ij}(w) - (w_{ij}-v_{ij})\bigr)\Bigr\}\Bigg],
\]
where $S(w)\subset \mathcal{G}^F$ denotes the (finite) set of refined grid points whose parent is $w$
under the parent map $R(\cdot)$ (so $R(v)=w$ for $v\in S(w)$). 

Let $(\mathcal{G}^{(r)})_{r\ge 1}$ be any sequence of refinements of $\mathcal{G}^C$ whose mesh sizes
tend to $0$ (equivalently, take $K\to\infty$). In the uniform case, conditioning on a parent
$w\in\mathcal{G}^C$, the discrete conditional distribution on the set of children $S_r(w)\subset\mathcal{G}^{(r)}$
converges to the uniform distribution over the cell
\[
C(w)\;:=\;\{v\in V : R(v)=w\},
\]
and the offset term $(w_{ij}-v_{ij})$ becomes the within-cell slack of the continuous type $v$ below its
parent $w$. Therefore the right-hand side above converges to
\[
\mathbb{E}_{w\sim D^C}\,\mathbb{E}_{v\sim C(w)}
\Bigg[\sum_{j=1}^m \max\Bigl\{0,\max_{i\in[n]}
\bigl(\Phi^{\lambda^C}_{ij}(w) - (w_{ij}-v_{ij})\bigr)\Bigr\}\Bigg].
\]
Here $v\sim C(w)$ denotes the conditional distribution of $v$ given $R(v)=w$, which is uniform over
the cell under the uniform distribution.

\cref{sec:method:continuous} lifts $\lambda^C$ to a continuous kernel $\lambda$ by spreading the discrete flow within each cell;
\cref{lem:lift-flow} guarantees that this lift is flow-conserving. 
Then \cref{lem:cont-thm24} (continuous weak duality) implies that any flow-conserving $\lambda$ yields an upper bound
on the optimal continuous DSIC revenue:
\[
\mathrm{Rev}(D)\;\le\;\int_V f(v)\Big(\sum_{j=1}^m \max_{i\in[n]}(\Phi^\lambda_{ij}(v))_+\Big)\,dv.
\]
Moreover, under the uniform lifting construction, the induced continuous virtual values satisfy the
explicit within-cell relation
\(
\Phi^\lambda_{ij}(v)=\Phi^{\lambda^C}_{ij}(R(v))-(R(v)_{ij}-v_{ij}),
\)
which is the continuous analogue of the refined-grid formula. Plugging this identity into the \cref{lem:cont-thm24} bound and rewriting the integral by conditioning on $w=R(v)$ yields exactly
\[
\mathrm{Rev}(D)
\;\le\;
\mathbb{E}_{w\sim D^C}\,\mathbb{E}_{v\sim C(w)}
\Bigg[\sum_{j=1}^m \max\Bigl\{0,\max_{i\in[n]}
\bigl(\Phi^{\lambda^C}_{ij}(w) - (w_{ij}-v_{ij})\bigr)\Bigr\}\Bigg].
\]
\end{proof}

\section{Implementation Details}
\label{app:implementation}

\subsection{Computing Infrastructure and Runtime}

All experiments were implemented in PyTorch and executed on a cluster equipped with NVIDIA A100 (80GB) GPUs. Small-scale experiments (e.g., 2$\times$2) were trained on a single GPU, while larger instances (e.g., 3$\times$2) utilized Distributed Data Parallel (DDP) across up to 4 GPUs. Typical training times ranged from 5 to 10 hours for the largest instance (35$\times$35 grid, 3 bidders and 2 items) to a few seconds for the smallest (Yao auction) setting, depending on the grid resolution and number of bidders.

\subsection{Network Architecture}

We employ two distinct architectural variants depending on the complexity of the type space.

\paragraph{Standard MLP (Discrete and Low-Dimensional Settings).}

For discrete domains like the Yao auction or coarse discretizations of small continuous settings, we parameterize the flow policy using a standard Multi-Layer Perceptron (MLP).

\begin{itemize}
\item \textbf{Input:} The valuation profile of opposing bidders $v_{-i}$ (concatenated with a one-hot bidder ID for symmetric settings).
\item \textbf{Architecture:} A 3-layer MLP with ReLU or GELU activations and a hidden dimension of 256 or 512.
\item \textbf{Output:} The network directly outputs a flattened vector of logits corresponding to the transition matrix $\Pi \in \mathbb{R}^{K \times K}$ and sink vector $\alpha \in \mathbb{R}^K$.
\end{itemize}
This architecture provides a strong baseline and successfully recovers optimal revenues for known analytical cases (e.g., Table 1 in main text).

\paragraph{Fourier-Bilinear Networks (Scaled Continuous Settings).}

To scale to high-resolution grids (e.g., $50 \times 50$ per bidder) and larger numbers of bidders, explicitly outputting a $K^m \times K^m$ transition matrix becomes computationally prohibitive. For these regimes, we introduce a factorized architecture:
\begin{enumerate}
\item \textbf{Fourier Encoding:} We map raw grid coordinates $v \in [0,1]^m$ to high-dimensional features using Fourier embeddings $\gamma(v) = [\sin(2^k \pi v), \cos(2^k \pi v)]_{k=0}^{S-1}$ with $S=10$ scales. This allows the network to capture high-frequency variations in the optimal dual variables.
\item \textbf{Bilinear Factorization:} Instead of predicting the full transition matrix $\Pi_{uv}$, we predict low-rank factors. A \emph{Source Network} maps context $v_{-i}$ and source type $u$ to a query vector $q_u$, while a \emph{Target Network} maps target type $w$ to a key vector $k_w$. The flow logits are computed as the inner product $L_{uw} = \langle q_u, k_w \rangle$.
\end{enumerate}
This factorization decouples the parameter count from the square of the grid size, enabling efficient training on fine-grained discretizations.

The Fourier-Bilinear architecture
is only used for the multi-bidder, multi-item continuous
experiments at grid resolution $K=50$ (the $2\times 2$ and $3\times 2$ uniform
results in \cref{tab:multi_results} and the $2\times 2$ and $3\times 2$ Beta
results in \cref{tab:beta_results}); all other experiments use the standard
MLP variant.

\subsection{Differentiable Flow Layer}
The output of the neural network (whether MLP or Bilinear) is treated as the raw logits of an absorbing Markov chain. To strictly enforce the dual feasibility constraints, we apply a softmax function over the logits to obtain transition probabilities. Crucially, we clamp the sink probability $\alpha$ to be at least $\epsilon > 0$, ensuring the transition matrix $\Pi$ is strictly sub-stochastic.

The steady-state flow $\mu$ is computed by solving the linear system $(I - \Pi^\top)^{-1}\nu = \mu$. This operation is fully differentiable, allowing us to backpropagate the revenue objective through the equilibrium flow.

\subsection{Training Protocol}

For the optimization hyperparameters, we utilize the AdamW optimizer coupled with a cosine annealing scheduler (including a warm-up phase). The learning rate is initialized at a maximum of $3 \times 10^{-4}$ for continuous settings (or $1 \times 10^{-3}$ for discrete settings) and decays to a minimum of $1 \times 10^{-6}$. To ensure the network explores diverse flows before tightening the feasibility constraints, the sink probability parameter $\epsilon$ is annealed from a starting value of $1 \times 10^{-2}$ down to $5 \times 10^{-6}$ over the course of 150,000 to 160,000 total training steps.

\subsection{Evaluation and Certification} 

\paragraph{Discrete Validation.} For discrete settings where the type space is finite (e.g., the auction studied by Yao with binary valuation space), the dual objective is computed exactly by summing over the full support. This serves as a validation step, confirming that our neural architecture can recover known analytical solutions without discretization error.

\paragraph{Uniform Distribution.} For uniform valuations, we compute the \textbf{Certified Lifted Bound} by lifting the discrete dual variables to the continuous domain and integrating the resulting piecewise-linear functions exactly, as detailed in Section~\ref{sec:method:lift}.

\paragraph{Beta Distribution.} To maintain consistency with the uniform grid formulation established in our theoretical framework, we discretize the Beta(1,2) distribution using a uniform value grid. The probability mass for each cell is computed exactly from the cumulative distribution function (CDF).
However, the dual variables involve terms inversely proportional to the probability density ($1/f(v)$). For distributions where the density vanishes at the boundary (e.g., $f(1) = 0$ for Beta(1,2)), this creates numerical singularities. To resolve this while preserving the uniform grid structure, we employ a numerical clamping strategy: we restrict the effective computation grid to the interval $[0, 0.99]$. This ensures numerical stability for the inverse density terms while maintaining the uniform discretization structure required by our convergence certificates. The reported bounds are the Exact Discrete Bounds computed on this clamped uniform grid.

\end{document}